\definecolor{bestcolor}{gray}{0.5}
\newcommand{\blind}{0}
\def\spacingset#1{\renewcommand{\baselinestretch}%
{#1}\small\normalsize} \spacingset{1}
\newtheorem{thm}{Theorem}
\newtheorem{corollary}{Corollary}
\newtheorem{assump}{Assumption}
\def\bal#1\eal{\begin{align}#1\end{align}}
\def\balnn#1\ealnn{\begin{align*}#1\end{align*}}
\newcites{Sup}{References}
\DeclareMathOperator*{\argmin}{arg\,min}
\DeclareMathOperator*{\diag}{diag}
\DeclareMathOperator*{\bigoh}{\mathcal{O}}
\DeclareMathOperator*{\spn}{span}
\DeclareMathOperator*{\tr}{tr}
\DeclareMathOperator*{\rank}{rank}
\DeclarePairedDelimiter\set{\{}{\}}
\DeclarePairedDelimiterX{\norm}[1]{\lVert}{\rVert}{#1}
\DeclarePairedDelimiterX{\abs}[1]{\lvert}{\rvert}{#1}
\newcommand{\activatebar}{%
    \begingroup\lccode`\~=`\|
    \lowercase{\endgroup\let~}\innermid
    \mathcode`|=\string"8000
}
\newcommand{\innermid}{\nonscript\;\delimsize\vert\nonscript\;}
\DeclarePairedDelimiterX{\expectarg}[1]{[}{]}{%
    \ifnum\currentgrouptype=16 \else\begingroup\fi
    \activatebar#1
    \ifnum\currentgrouptype=16 \else\endgroup\fi
}
\DeclarePairedDelimiterX{\variancearg}[1]{(}{)}{%
    \ifnum\currentgrouptype=16 \else\begingroup\fi
    \activatebar#1
    \ifnum\currentgrouptype=16 \else\endgroup\fi
}
\newcommand{\E}{\mathbb{E} \, \expectarg}
\newcommand{\Var}{\operatorname{Var}\variancearg}
\newcommand{\Cov}{\operatorname{Cov}\variancearg}
\newcommand{\Bias}{\operatorname{Bias}\variancearg}
\newcommand{\iidsim}{\overset{\text{iid}}\sim}
\def\mOne{{\mathbbm{1}}}
\newcommand{\bX}{\bm{\mathrm{X}}}
\newcommand{\bx}{\bm{\mathrm{x}}}
\newcommand{\bu}{\bm{\mathrm{u}}}
\newcommand{\be}{\bm{\mathrm{e}}}
\newcommand{\bmu}{\bm{\mu}}
\newcommand{\Sxy}{S_{Y \mid \bX}}
\newcommand{\SxAy}{S_{Y \mid \bX, \bX \in A}}
\newcommand{\Reals}[1]{\mathbb{R}^{#1}}
\newcommand{\ind}[1]{\mOne\{#1\}}
\newcommand{\dataset}{\mathcal{D}_n}
\newcommand{\LeftChild}{A_{L}}
\newcommand{\RightChild}{A_{R}}
\newcommand{\mtry}{m_{try}}
\begin{document}

\if0\blind
{
\title{ \bf Dimension Reduction Forests: Local Variable Importance using Structured Random Forests }
\author{
Joshua Daniel Loyal\hspace{.2cm}\\
Department of Statistics, University of Illinois at Urbana-Champaign\\
and \\
Ruoqing Zhu\hspace{.2cm}\\
Department of Statistics, University of Illinois at Urbana-Champaign\\
and \\
Yifan Cui\hspace{.2cm}\\
Department of Statistics, University of Pennsylvania \\
and \\
Xin Zhang\hspace{.2cm}\\
Department of Statistics, Florida State University
}
\date{}
\maketitle
}\fi

\if1\blind
{
  \bigskip
  \bigskip
  \bigskip
  \begin{center}
    \spacingset{1.5}
    {\LARGE\bf Dimension Reduction Forests: Local Variable Importance using Structured Random Forests}
  \end{center}
  \medskip
} \fi

\bigskip
\begin{abstract}
Random forests are one of the most popular machine learning methods due to their accuracy and variable importance assessment. However, random forests only provide variable importance in a global sense. There is an increasing need for such assessments at a local level, motivated by applications in personalized medicine, policy-making, and bioinformatics. We propose a new nonparametric estimator that pairs the flexible random forest kernel with local sufficient dimension reduction to adapt to a regression function's local structure. This allows us to estimate a meaningful directional local variable importance measure at each prediction point. We develop a computationally efficient fitting procedure and provide sufficient conditions for the recovery of the splitting directions. We demonstrate significant accuracy gains of our proposed estimator over competing methods on simulated and real regression problems. Finally, we apply the proposed method to seasonal particulate matter concentration data collected in Beijing, China, which yields meaningful local importance measures. The methods presented here are available in the {\tt drforest} Python package.

\end{abstract}

\noindent%
{\it Keywords:} Random Forests; Sufficient Dimension Reduction; Variable Importance
\vfill

\newpage
\spacingset{1.5}

\section{Introduction} \label{sec:intro}

Random forests~\citep{breiman2001} have repeatedly proven themselves as an effective supervised learning method. Random forests' competitive predictive accuracy, even when the problem is non-linear, high-dimensional, or involves complex interaction effects has resulted in their wide-scale application. Also, unlike many black-box methods, random forests provide an interpretable global variable importance measure for each predictor variable, which is of a broad interest in scientific problems in bioinformatics~\citep{diaz2006}, ecology~\citep{prasad2006}, and personalized medicine~\citep{laber2015}.

The success of random forests has sparked a desire to improve upon the original algorithm's shortcomings. A growing effort attempts to understand the statistical properties of random forests~\citep{lin2006, biau2012anlrf, mentch2016, wager2019grf, cui2019} such as the consistency and asymptotic normality of their predictions. Also, the core random forest methodology has been extended to numerous frameworks, including quantile regression~\citep{meinshausen2006}, survival analysis~\citep{hothorn2004, ishwaran2008, cui2019}, and causal inference~\citep{wager2018earlyci}. Furthermore, the tree-growing process has been modified to allow for additional randomness~\citep{geurts2006et}, linear combination splits~\citep{breiman2001, menze2011}, and non-uniform feature selection~\citep{amaratunga2008, zhu2015rltrees}. Lastly, the out-of-bag global variable importance measure has been extended to better account for correlated features through conditional independence tests~\citep{strobl2008}. For a comprehensive overview of random forest research developments, we refer the reader to the review paper by \citet{biau2016}.

In this work, we develop {\it dimension reduction forests} (DRFs): a new method for nonparametric regression that also quantifies local variable importance by using methods from sufficient dimension reduction (SDR)~\citep{li1991sir, bing2018}. Fields such as personalized medicine, policy-making, and bioinformatics are increasingly relying on these local assessments. For example, the ability to tailor medical treatment to patients based on their unique genetic makeup can drastically decrease their mortality. Currently, random forests have limited value in these applications because they lack a natural measure of local variable importance. To overcome this deficiency, we take the perspective of random forests as adaptive kernel methods. We pair random forests with sufficient dimension reduction to estimate a nonparametric kernel that adapts to the regression function's local contours. We then leverage this adaptivity to estimate a type of local variable importance we call {\it local subspace variable importance}. The result is a powerful model-free predictive method that is more accurate than naively combining random forests with global SDR methods. Before formally introducing our approach, we motivate how its local adaptivity overcomes certain deficiencies in traditional random forests.

% Motivation
\section{Motivation}\label{sec:motivation}

Throughout this article, we consider a general regression problem in which a continuous response $Y \in \mathbb{R}$ is predicted as a function of $p$ covariates $\bX = (X_1, \dots, X_p)^{\rm T} \in \Reals{p}$.

\subsection{Adapting the Random Forest Kernel to Local Structure}

A random forest (RF) is an ensemble of $M$ randomized decision trees. The random forest kernel between two points $\bx_0, \bx_1 \in \Reals{p}$ is
\begin{equation}\label{eq:rf_kernel}
K_{\text{RF}}(\bx_0, \bx_1) = \frac{1}{M}\sum_{m=1}^M \sum_{u = 1}^{\ell_m} \ind{\bx_0 \in A_u^m} \ind{\bx_1 \in A_u^m},
\end{equation}
where $\ell_m$ is the number of leaf nodes in the $m$th tree and $\set{A_1^m, \dots, A_{\ell_m}^m}$ contains the $m$th tree's leaf nodes with each $A_u^m \subseteq \Reals{p}$~\citep{scornet2016kernel}. This kernel is equivalent to the empirical probability that $\bx_0$ and $\bx_1$ share a leaf node within the random forest. Our goal is to extract a local variable importance measure from this kernel; however, this measure is only meaningful if the kernel reflects the regression function's local structure. Currently, the random forest kernel does not exhibit such adaptivity, which makes it difficult to extract local variable importance measures.

To be more concrete, consider the following non-linear regression model of a univariate response on a bivariate predictor: We draw $\bX_1, \dots, \bX_n$ independently from the uniform distribution on $[-3, 3]^2$, with response
\begin{equation}\label{eq:intro_example}
Y = 20 \max\left\{e^{-2 (X_1 - X_2)^2},
               \ 2 e^{-0.5(X_1^2 + X_2^2)},
               \ e^{-(X_1 + X_2)^2}\right\} + \varepsilon, \quad \varepsilon \overset{\text{iid}}\sim N(0, 1),
\end{equation}
and our goal is to extract local information about the regression (mean) function $m(\bx_0) = \E{Y | \bX = \bx_0 }$. Figure \ref{fig:kernel_comp} displays the contour lines of this function, which depend on three subspaces throughout the function's domain: $\mathcal{S}_1 = \spn(\be_1 - \be_2)$ (bottom-left and top-right), $\mathcal{S}_2 = \spn(\be_1 + \be_2)$ (top-left and bottom-right), and $\mathcal{S}_3 = \spn(\set{\be_1, \be_2})$ (center), where $\be_1$ and $\be_2$ are the standard basis. An ideal kernel should reflect the local structure by extending its weights along either the orthogonal complement of the one-dimensional subspaces, $\mathcal{S}_1$ and $\mathcal{S}_2$, or the contour lines within the two-dimensional subspace, $\mathcal{S}_3$. The random forest kernel does not exhibit this behavior.

Figure \ref{fig:kernel_comp} displays the random forest kernel estimated on data from the regression function in Equation (\ref{eq:intro_example}). The RF kernel cannot reflect the local structure because the leaf nodes, $\set{A_u^m}_{u=1}^{\ell_m}$, form an axis-aligned partition of the input space. Indeed, the RF kernel exhibits an axis-aligned plus shape that does not exploit the local dimension reduction structure. As a first encouraging result, our proposed dimension reduction forest's induced kernel, displayed in Figure \ref{fig:kernel_comp}, warps around the contour lines, improving its performance and accurately reflecting the regression function's local structure. We develop the dimension reduction forest kernel in Section \ref{sec:method}.

\begin{figure}[btp]
    \centering
    \includegraphics[width=\textwidth]{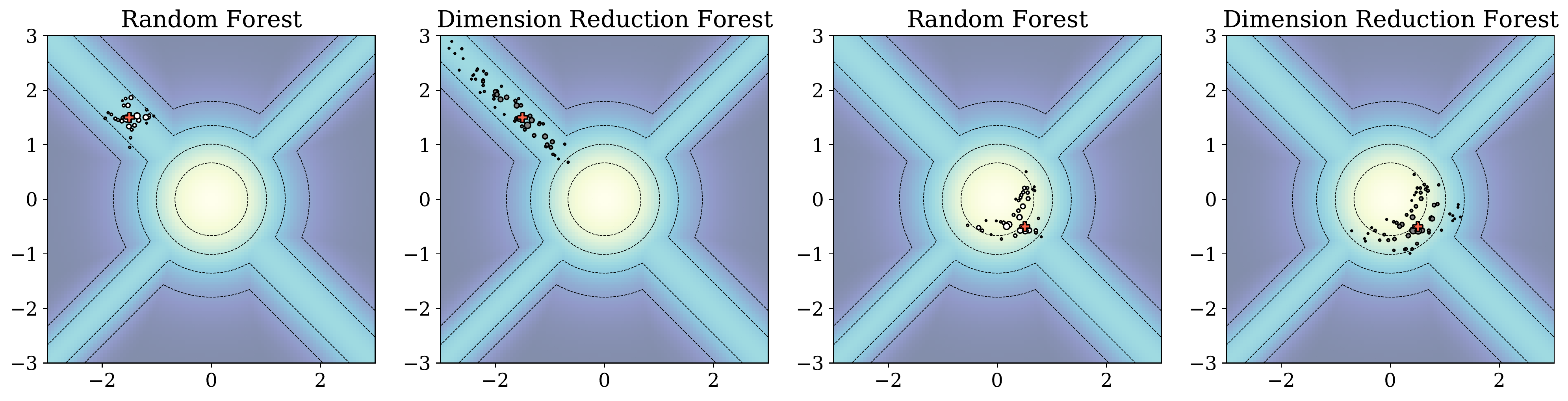}
    \caption{Kernels generated from a random forest and a dimension reduction forest on the problem defined in Equation (\ref{eq:intro_example}). The target prediction point $\bx_0$ is displayed with a red plus. The size of each circle is proportional to the kernel's weight. The contours of the true regression function are shown as black dashed curves.}
    \label{fig:kernel_comp}
\end{figure}

\subsection{Local Variable Importance}\label{subsec:vi}

Random forests' popularity is partly due to their ability to produce interpretable global variable importance measures; however, extracting local variable importance is an active research area. As previously mentioned, estimation and inference at the local level are becoming increasingly important, with personalized medicine applications such as inferring individualized therapeutic treatments. Our primary reason for constructing a kernel that can reflect local structure is to uncover such local insights.

As an illustration, we again consider the problem outlined in Equation (\ref{eq:intro_example}); however, now we draw $\bX_1, \dots, \bX_n$ independently from a $U[-3,3]^5$ so that there are three additional uninformative covariates. Random forests' global importance measure can indicate that $X_1$ and $X_2$ are informative; however, it cannot reveal the feature's heterogeneous contributions throughout the input space.

Figure \ref{fig:local_svi} displays a standard random forest's global permutation variable importance and our proposed local variable importance measure extracted from a dimension reduction forest at two query points. For regression, the importance assigned by a random forest to a feature is the decrease in mean squared error when the values of that feature are randomly permuted in the out-of-bag samples~\citep{ishwaran2018}. In contrast, DRF's local variable importance can be found in Algorithm~\ref{alg:lsi_algo} of Section~\ref{sec:method}. The random forest only identifies the two significant predictors, while the dimension reduction forest determines the variables' heterogeneous contribution. At $\bx_0 = (-1.5, 1.5, 0, 0, 0)^{\rm T}$, the dimension reduction forest correctly indicates that simultaneously increasing or decreasing both $X_1$ and $X_2$ influences changes in the regression function. In contrast, at $\bx_0 = (0.5, -0.5, 0, 0, 0)^{\rm T}$, the DRF indicates that simultaneously increasing $X_1$ while decreasing $X_2$, or vice versa, is influential. Unlike the random forest's global importance, the DRF's local importance results in two drastically different interventions.

\begin{figure}[btp]
\centering
\includegraphics[width=\textwidth]{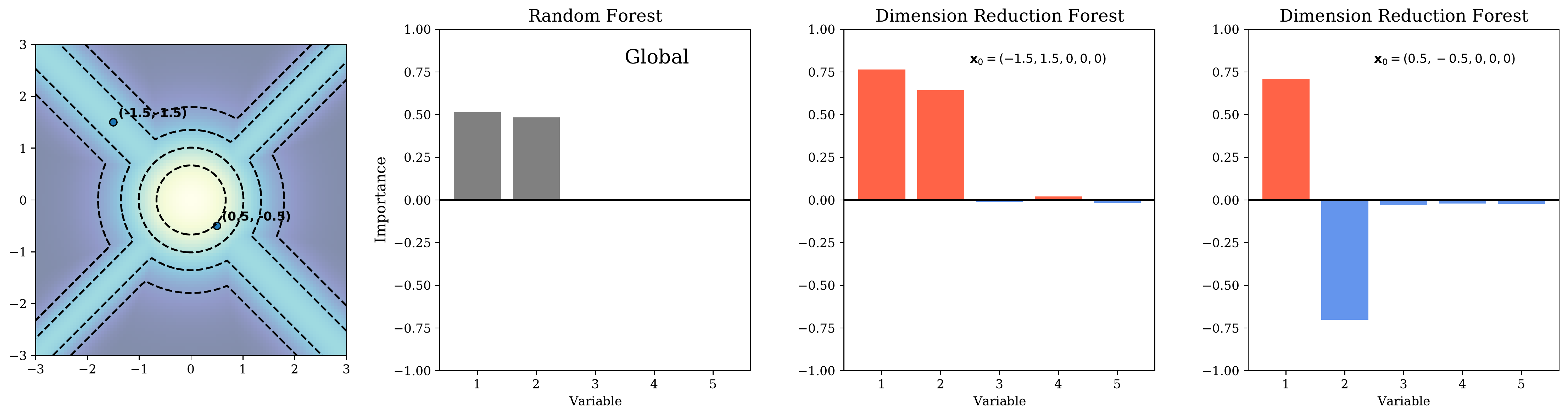}
\caption{Variable importance measures produced by a random forest and a dimension reduction forest.}
\label{fig:local_svi}
\end{figure}

\section{Dimension Reduction Forests}\label{sec:method}

Now, we formally introduce dimension reduction trees (DRTs) and dimension reduction forests (DRFs). We describe the DRF algorithm in Section \ref{subsec:splitting_rule}, connect the DRF splitting rule to an approximation of a locally adaptive kernel in Section \ref{subsec:adaptive_kernel}, and outline our method for extracting local variable importances from the induced DRF kernel in Section \ref{subsec:lsvi}.

\subsection{The Splitting Rule for Dimension Reduction Forests}\label{subsec:splitting_rule}

Dimension reduction forests are random forests composed of dimension reduction trees that use sufficient dimension reduction (SDR) techniques~\citep{li1991sir, bing2018} to approximate a locally adaptive kernel. Dimension reduction trees are built using a modified classification and regression tree (CART) algorithm \citep{breiman1984cart} with linear combination splitting rules. Specifically, a regression tree $T(\bx_0) = \sum_{u=1}^{\ell} \mu_{u} \ind{\bx_0 \in A_{u}}$, with $\ell$ leaf nodes $A_u \subseteq \Reals{p}$ and leaf means $\mu_u \in \Reals{}$, is recursively built using a training data set $\mathcal{D}_n = \set{(Y_1, \bX_i), \dots, (Y_n, \bX_n)}$ as follows. At any current leaf node $A$, we search for a splitting rule $\ind{\beta^{\rm T}_j\bx \leq c}$ for $\beta_j \in \Reals{p}$ and $j \in \set{1, \dots, d}$, for some $d \leq p$. We set the possible values of the threshold $c$ to all unique values of $\beta_j^{\rm T}\bX_i$ within node $A$. We then refine the current leaf node into two child nodes: $\LeftChild = A \cap \set{\beta^{\rm T}_j\bx \leq c}$ and $\RightChild = A \cap \set{\beta^{\rm T}_j \bx > c}$ with a certain splitting criterion. In the regression context, the best split minimizes the sum of squared errors in each child node:
\begin{equation}\label{eq:cart}
\argmin_{(\LeftChild, \RightChild) \in \mathcal{C}} \left\{ \sum_{\set{i \, : \, \bX_i \in \LeftChild}} (Y_i - \bar{Y}_{\LeftChild})^2 + \sum_{\set{i \, : \, \bX_i \in \RightChild}} (Y_i - \bar{Y}_{\RightChild})^2 \right\},
\end{equation}
where $\mathcal{C}$ is the set of all partitions formed by adding a rule of the form $\ind{\beta^{\rm T}_j\bx \leq c}$, and $\bar{Y}_{\LeftChild}$ and $\bar{Y}_{\RightChild}$ are the mean value of $Y$ inside $\LeftChild$ and $\RightChild$, respectively.  When we restrict the directions to the standard basis vectors, $\beta_j = \mathbf{e}_j$ for $j \in \set{1, \dots, p}$, the algorithm reduces to the axis-aligned splits used in the original CART algorithm. Algorithm \ref{alg:drf} outlines the procedure for building a dimension reduction forest.

\begin{sampler}[htbp]
Given a training data set $\mathcal{D}_n = \set{(\bX_i, Y_i)}_{i=1}^n$, do the following:
\begin{enumerate}
\item Draw $M$ bootstrap samples each with sample size $n$ from $\mathcal{D}_n$.
\item For the $m$th bootstrap sample, where $m \in \set{1, \dots, M}$, fit one DRT model $\hat{T}_m$ using the following rules:
\begin{enumerate}
\item {\it Optional variable screening}: At an internal node $A$, select the top $\mtry$ features that minimize Equation (\ref{eq:cart}) based on their corresponding axis-aligned splitting rules. Otherwise, set $\mtry = p$.
\item Estimate the leading sufficient dimension reduction direction $\hat{\beta} \in \Reals{p}$ using SIR ($\hat{\beta}_{\text{SIR}}$) and SAVE ($\hat{\beta}_{\text{SAVE}}$) with the features selected in step (a) and the samples in $A$ as input.
\item Project all observations onto the reduced space by calculating $[\hat{\beta}_{\text{SIR}}, \hat{\beta}_{\text{SAVE}}]^{\rm T}\bX_i$ for each sample $i$ in $A$. Construct a univariate splitting rule $\ind{\hat{\beta}^{\rm T} \bx \leq c}$ by minimizing Equation (\ref{eq:cart}) over the two features in the reduced data set.
\item Recursively apply steps (a)--(c) on each child node until the node's sample size is smaller than a pre-specified value $n_{min}$.
\item For each leaf node, set $\mu_u = \bar{Y}_{A_u} = \abs{\set{i \, : \, \bX_i \in A_u}}^{-1}\sum_{\set{i \, : \, \bX_i \in A_u}} Y_i$.
\end{enumerate}
\item Average the $M$ trees to obtain the final model, $\hat{f}(\bx) = M^{-1} \sum_{m=1}^M \hat{T}_m(\bx)$, with associated kernel $K_{\text{DRF}}(\bx_0, \bx_1)$ defined in Equation (\ref{eq:rf_kernel}).
\end{enumerate}
\caption{Dimension reduction forests.}
\label{alg:drf}
\end{sampler}

What differentiates DRFs from other random forest variants are splitting directions, $\beta$, estimated through inverse regression, a type of sufficient dimension reduction. Briefly, sufficient dimension reduction aims to find a matrix $B \in \Reals{p \times d}$, where $d \leq p$, such that the conditional distribution of $Y$ given $\bX$ is the same as $Y$ given $B^{\rm T} \bX$. In regression, this implies that
\begin{equation} \label{eq:sdr_model}
    Y = g(B^{\rm T}\bX, \varepsilon) = g(\beta_1^{\rm T}\bX, \dots, \beta_d^{\rm T}\bX, \varepsilon)
\end{equation}
for an unspecified link function $g(\cdot)$ and random noise $\varepsilon$. This model is equivalent to assuming $Y \perp \bX \mid B^{\rm T}\bX$. The column space of $B$, denoted by $\spn(B)$, is called a dimension reduction (DR) subspace. The intersection of all DR subspaces, assuming itself is also a DR subspace, is called the central subspace and denoted by $\Sxy$.

Since sufficient dimension reduction does not constrain the form of $g(\cdot)$, it is a powerful tool for designing flexible linear combination splitting rules. This flexibility differentiates our approach from existing oblique forests~\citep{menze2011, rainforth2015}, which impose restrictive linear constraints on the regression surface. Note that DRTs do not necessarily impose any global dimension reduction, such as in Equation (\ref{eq:sdr_model}). They only assume meaningful local dimension reduction structures develop as one progressively focuses on more local regions throughout the tree building process, i.e., $Y \perp \bX \mid B_A^{\rm T}\bX$ for $\bX \in A \subseteq \Reals{p}$ where $B_A \in \Reals{p \times d}$. Furthermore, we posit that using DRTs within the RF kernel improves its local statistical efficiency by aligning the kernel's weights with the local Hessian. We elaborate on this point in the next section.

To estimate the sufficient dimension reduction directions, $\set{\beta_1, \dots, \beta_d}$, we use SIR~\citep{li1991sir} and SAVE~\citep{cook1991save} since the former is efficient for small sample sizes and the latter is exhaustive. Both methods solve a generalized eigenvalue problem to estimate $B$, which we briefly overview in Section \ref{subsec:sir_save_algo} of the Supplementary Materials. Note that DRTs select the leading eigenvector instead of searching over the full DR subspace. Such an exhaustive search would be redundant because the eigenvalues associated with each direction already measure their local importance. Furthermore, to avoid choosing between SIR and SAVE, we search over the leading directions estimated by both methods when determining a split. This results in a splitting rule that combines the advantages of both SIR and SAVE and performs better than just applying a single method in each node. At the same time, we show that a DRF's computational complexity remains comparable to traditional random forests in Section~\ref{comp_complexity} of the Supplementary Materials.

Three problems arise when using traditional SDR methods within dimension reduction forests. First, SIR (SAVE) require at least $p$ samples for inference; however, it is often advisable to set $n_{min} < p$. To accommodate this setting, the DRT searches over axis-aligned splits when the internal node's sample size is less than $p$. Although this may affect the forest's ability to adapt to local dimension reduction structures, we found the degradation negligible in practice. Second, the use of dense linear combination splits limits DRF's applicability in medium to high-dimensional settings. For this reason, we introduce an optional variable screening step in Algorithm \ref{alg:drf} that selects covariates based on the minimization of Equation (\ref{eq:cart}). Finally, both SIR and SAVE require strong assumptions on the covariates' joint distribution for consistent estimation. In Section \ref{sec:theory}, we demonstrate that these assumptions only need to hold at the tree's root node, extending our method's applicability to any setting valid for SIR or SAVE.

\subsection{A Connection to Locally Adaptive Kernel Regression}\label{subsec:adaptive_kernel}

A recent research initiative aims to improve random forest's performance by accounting for inefficiencies in the induced random forest kernel function when the regression function has specific properties. For example, local linear forests~\citep{friedberg2018} modify the RF kernel to better model smooth signals through a local linear correction. Here, we argue that dimension reduction forests modify the RF kernel to better model signals with local dimension reduction structure by aligning the induced kernel with the local Hessian. The following argument is heuristic and only serves an illustrative purpose; however, the empirical studies in Section~\ref{sec:simulation} support the conclusions.

We begin by reviewing the bias-variance trade-off in multivariate kernel regression.
Consider the multivariate kernel $K(\bX_i, \bx_0) = \phi\left(\sqrt{(\bX_i - \bx_0)^{\rm T} H_{\bx_0} (\bX_i - \bx_0)}\right)$, where $\phi(\cdot)$ is a univariate kernel, such as the standard normal density, and $H_{\bx_0}$ is a positive definite $p \times p$ bandwidth matrix that depends on the prediction point $\bx_0$. Under certain regularity conditions, \citet{ruppert1994} showed that the bias and variance of the Nadaraya-Watson estimator, $\hat{m}(\bx_0) = (\sum_{i=1}^n K(\bX_i, \bx_0))^{-1}\sum_{i=1}^n K(\bX_i, \bx_0) \, Y_i$, conditioned on $\bX_1, \dots, \bX_n$ are
\begin{align*}
\Bias{\hat{m}(\bx_0)} &= \mu_2(\phi) \frac{m'(\bx_0)^{\rm T} H_{\bx_0} H_{\bx_0}^{\rm T}f'_{\bX}(\bx_0)}{f_{\bX}(\bx_0)}  +  \frac{1}{2} \mu_2(\phi) \tr(H_{\bx_0} m''(\bx_0)) + o_p(\tr(H_{\bx_0})), \\
\Var{\hat{m}(\bx_0)} &= \frac{1}{n \det(H_{\bx_0})} \norm{\phi}_2^2 \frac{\sigma^2(\bx_0)}{f_{\bX}(\bx_0)}\{1 + o_p(1)\}, \label{eq:kernel_var}
\end{align*}
where $\mu_2(\phi) = \int u^2 \phi(u) du$, $\norm{\phi}_2^2 = \int \phi(u)^2 du$, $\sigma^2(\bx_0) = \Var{Y | \bX = \bx_0}$, and the density of the covariates is $f_{\bX}(\cdot)$. Also, $m'(\bx_0)$ and $m''(\bx_0)$ indicate the gradient and Hessian of $m(\cdot)$ evaluated at $\bx_0$, respectively. The bias's first term, also known as the design bias, is often eliminated by applying a local linear correction~\citep{fan1996}. This correction was accounted for in random forests by \citet{friedberg2018}. For this reason, we focus on minimizing the second term involving the product of $H_{\bx_0}$ and $m''(\bx_0)$ while also controlling the variance.

In the presence of local dimension reduction structure, $m(\bx_0) = m(B_{\bx_0}^{\rm T} \bx_0)$ for some $B_{\bx_0} \in \Reals{p \times d}$, where $d < p$. In this case,
\begin{equation*}
\spn(m''(\bx_0)) = \spn(m''(B^{\rm T}_{\bx_0} \bx_0)) = \spn(B_{\bx_0} m''(\mathbf{u}_0) B_{\bx_0}^{\rm T}) \subseteq \spn(B_{\bx_0}),
\end{equation*}
where $\bu_0 = B_{\bx_0}^{\rm T} \bx_0$ and the Hessian in the third expression is taken with respect to $\bu_0$. This observation implies that $\rank(m''(\bx_0)) = d_0 \leq d$, where we assume $d_0 \neq 0$ for simplicity. According to the bias-variance decomposition given above, one can decrease a kernel estimator's variance without incurring any additional bias by increasing the bandwidth along the $(p-d_0)$-dimensional orthogonal compliment of $\spn(m''(\bx_0))$, which we denote by $\spn(m''(\bx_0))^{\perp}$.  Indeed, this can be accomplished by choosing a specific eigendecomposition of $H_{\bx_0} = U_{\bx_0}\Lambda_{\bx_0}U_{\bx_0}^{\rm T}$, where $U_{\bx_0}^{\rm T}U_{\bx_0} = I_p$ and $\Lambda_{\bx_0} = \diag(\lambda_{\bx_0,1}, \dots, \lambda_{\bx_0, p})$. Ignoring the design bias, if one selects $\spn(U_{\bx_0}) = \spn(m''(\bx_0))$, the bias and variance of the estimator in the new basis are
\begin{align*}\label{eq:lowdim_bias}
\Bias{\hat{m}(\bx_0)} &\propto\tr\left\{
\begin{pmatrix}
\Lambda_{\bx_0}^{(1)} & 0 \\
0 & \Lambda_{\bx_0}^{(2)}
\end{pmatrix}
\begin{pmatrix}
\Gamma_{d_0 \times d_0} & 0 \\
0 & 0_{(p - d_0) \times (p - d_0)}
\end{pmatrix}
\right\} = \tr(\Lambda_{\bx_0}^{(1)} \Gamma_{d_0\times d_0}), \\
\Var{\hat{m}(\bx_0)} &\propto \det(\Lambda_{\bx_0}^{(1)})^{-1}\det(\Lambda_{\bx_0}^{(2)})^{-1},
\end{align*}
where $\propto$ denotes proportionality between two values, $\Gamma_{d_0 \times d_0}$ is a diagonal matrix of the non-zero eigenvalues of $m''(\bx_0)$ and $\Lambda_{\bx_0}^{(1)} \in \Reals{d_0 \times d_0}$ and $\Lambda_{\bx_0}^{(2)} \in \Reals{(p - d_0) \times (p - d_0)}$ are also diagonal matrices. The previous two expressions demonstrate that increasing $\Lambda_{\bx_0}^{(2)}$, the bandwidth along $\spn(m''(\bx_0))^{\perp}$, reduces the variance without altering the asymptotic bias.

Of course, estimating the appropriate rotation matrix, $U_{\bx_0}$, and bandwidth matrix, $\Lambda_{\bx_0}$, for each sample point $\bx_0$ is computationally infeasible. However, the dimension reduction forest kernel approximates this procedure by recursively using the leading SDR direction to estimate an appropriate rotation and the splitting criterion in Equation (\ref{eq:cart}) to estimate an optimal bandwidth along this direction. We use SDR to estimate the splitting direction because the span of the local Hessian lies in the central subspace regardless of the form of $m(\cdot)$. Indeed, $\spn(m''(\bx_0)) \subseteq \spn(B_{\bx_0}) \subseteq \Sxy$ by definition.

\subsection{Local Subspace Variable Importance}\label{subsec:lsvi}

We propose a heuristic approach to extracting local variable importance from DRFs. Figure~\ref{fig:kernel_comp} demonstrates that the DRF kernel concentrates its bandwidth around the regression function's contour lines. The previous section gives intuition for this phenomenon, where we argued that the DRF kernel potentially adapts to the subspace spanned by the local Hessian. Note that when the local gradient is non-zero, we also expect this subspace to align with the gradient because $\spn(m'(B_{\bx_0}^{\rm T}\bx)) = \spn(B_{\bx_0} m'(\bu)) \subseteq \spn(B_{\bx_0})$.  In general, given an estimated kernel function, $K_{\text{DRF}}(\bx_0, \cdot)$, the subspace of minimum variance about the query point $\bx_0$ potentially aligns with the leading eigenvector of $B_{\bx_0}$ where $\spn(B_{\bx_0})$ is a local DR subspace. Formally, we define this direction as the {\it local subspace variable importance} (LSVI) at $\bx_0$. We estimate the LSVI at $\bx_0$ using the smallest principal component of the covariates' covariance matrix after centering the covariates around $\bx_0$ and weighting them by the dimension reduction forest kernel. We present the full algorithm for estimating LSVIs in Algorithm \ref{alg:lsi_algo}. Only the span of each LSVI is identifiable, which we interpret as the one-dimensional subspace that most influences $m(\bx_0)$.

\begin{sampler}[htbp]
Given a dimension reduction forest $\hat{f}(\bx)$ estimated with training data set $\mathcal{D}_n = \set{(\bX_i, Y_i)}_{i=1}^n$ and a query point $\bx_0$, do the following:
\begin{enumerate}
\item Calculate the sample weights $w_i = K_{\text{DRF}}(\bx_0, \bX_i)$, $1 \leq i \leq n$, where the RF kernel is defined in Equation (\ref{eq:rf_kernel}).
\item Center the variables at the query point: $\tilde{\bX}_i = \bX_i - \bx_0$ for $1 \leq i \leq n$.
\item Calculate the weighted sample mean $\bmu = \left(\sum_{i=1}^n w_i\right)^{-1} \ \sum_{i=1}^n w_i \tilde{\bX}_i$.
\item Output the eigenvector of $\left(\sum_{i=1}^n w_i\right)^{-1} \ \sum_{i=1}^n w_i (\tilde{\bX}_i - \bmu) (\tilde{\bX}_i - \bmu)^{\rm T}$ with the smallest eigenvalue.
\end{enumerate}
\caption{Local subspace variable importance.}
\label{alg:lsi_algo}
\end{sampler}

% Theory
\section{Theoretical Results} \label{sec:theory}

Here, we demonstrate that the conditions necessary for SIR and SAVE to recover the central subspace are also sufficient for dimension reduction forest's local splitting mechanism to recover a local central subspace.  We provide the proofs in Section~\ref{subsec:sir_save_proofs} of the  Supplementary Materials.

First, we introduce some notation. Let $A \subseteq \mathbb{R}^p$ denote the subset of the domain occupied by an internal node $A$ in a DRT. We denote the local central subspace of data contained in $A$ by $S_{Y \mid \bX, \bX \in A}$. Also, let $B_A \in \Reals{p \times d}$ denote a matrix such that $\spn(B_A) = S_{Y \mid \bX, \bX \in A}$, i.e., $Y \perp \bX \mid B_A^{\rm T} \bX$ for data in $A$.

Known results from the SDR literature demonstrate that the following two assumptions are sufficient for SIR (SAVE) to recover a subspace of $\SxAy$ when applied to data in an internal node $A$.

\begin{assump}[Local Linearity Condition]\label{asmp:lcm}
$\E{\bX|B_A^{\rm T}\bX, \bX \in A}$ is a linear function of $B^{\rm T}_A\bX$.
\end{assump}

\begin{assump}[Local Constant Variance Condition]\label{asmp:ccv}
$\Var{\bX|B_A^{\rm T}\bX, \bX \in A}$ is a nonrandom matrix.
\end{assump}

\noindent When $A = \Reals{p}$, Assumption \ref{asmp:lcm} and Assumption \ref{asmp:ccv} reduce to the usual linearity and constant variance conditions commonly made in the SDR literature. However, the DRT algorithm also requires the recovery of directions in $\SxAy$ within $A$'s child nodes where the local linearity and constant variance conditions may not hold. Remarkably, the following results address this requirement by showing that as long as Assumptions \ref{asmp:lcm} and Assumption \ref{asmp:ccv} hold in the parent node $A$, SIR (SAVE) applied to data in $A$'s children still recovers directions within the local central subspace, $\SxAy$.

\begin{thm}\label{thm:sir}
Assume that model (\ref{eq:sdr_model}) is satisfied in the parent node $A$, i.e., $Y \perp \bX \mid B_A^{\rm T}\bX$ for some $B_A$ if $\bX \in A$. Further assume Assumption \ref{asmp:lcm} holds within this node. If we implement a splitting rule of the form $\ind{\beta^{\rm T} \bx \leq c}$, where $\beta \in \spn(B_A)$, then within each child node $A_L = A \cap \{\beta^T \bx \leq c\}$ and $A_R = A \setminus A_L$ we have that
\begin{equation*}
\Sigma_A^{-1} \E{\bX|Y, \bX \in A_L} \in \spn(B_A) \quad \text{ and } \quad \Sigma_A^{-1} \E{\bX|Y, \bX \in A_R} \in \spn(B_A).
\end{equation*}
\end{thm}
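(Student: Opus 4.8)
The plan is to run the classical SIR derivation, but — and this is the point of the theorem — to invoke the linearity condition only at the parent node $A$. The enabling observation is that $\beta \in \spn(B_A)$ forces $\beta = B_A\gamma$ for some $\gamma \in \Reals{d}$, so the splitting event $\{\beta^{\rm T}\bx \le c\} = \{\gamma^{\rm T}B_A^{\rm T}\bx \le c\}$ is $\sigma(B_A^{\rm T}\bX)$-measurable. Consequently $\{\bX \in A_L\}$ is obtained from $\{\bX \in A\}$ by intersecting with an event in $\sigma(B_A^{\rm T}\bX)$, and $\{\bX \in A_R\}$ by intersecting with the complementary event; I will treat $A_L$, the case of $A_R$ being identical.

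I would first establish two ``inheritance'' facts, writing $W = B_A^{\rm T}\bX$ and $E = \{\beta^{\rm T}\bX \le c\}$. (i) \emph{The SDR model is inherited by the child:} $Y \perp \bX \mid W$ continues to hold on $\{\bX \in A_L\}$. Since $\ind{E}$ is $\sigma(W)$-measurable, conditioning the factorization $\E{f(Y)g(\bX)\mid W, \bX \in A} = \E{f(Y)\mid W, \bX \in A}\,\E{g(\bX)\mid W, \bX \in A}$ further on $E$ changes nothing on the set $E$, so the factorization persists with $A$ replaced by $A_L$. (ii) \emph{The inverse conditional mean is unchanged:} for $\bX \in A_L$, $\E{\bX \mid W, \bX \in A_L} = \E{\bX \mid W, \bX \in A}$, again because $\ind{E}$ is $\sigma(W)$-measurable. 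By Assumption \ref{asmp:lcm} the right-hand side is affine in $W$; matching conditional means and conditional covariances with $W$ (given $\bX \in A$) pins down the coefficients and gives $\E{\bX \mid W, \bX \in A} = \mu_A + \Sigma_A B_A (B_A^{\rm T}\Sigma_A B_A)^{-1} B_A^{\rm T}(\bX - \mu_A)$, with $\mu_A = \E{\bX \mid \bX \in A}$ and $\Sigma_A = \Var{\bX \mid \bX \in A}$ (nonsingular, as the statement presumes).

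Next I would combine these via iterated expectations:
\begin{align*}
\E{\bX \mid Y, \bX \in A_L}
&= \E{\E{\bX \mid Y, W, \bX \in A_L} \mid Y, \bX \in A_L} \\
&= \E{\E{\bX \mid W, \bX \in A} \mid Y, \bX \in A_L},
\end{align*}
where the inner conditioning drops $Y$ by (i) and replaces $A_L$ by $A$ by (ii). Substituting the affine form from (ii) and abbreviating $v_L = \E{\bX \mid Y, \bX \in A_L}$ yields $v_L - \mu_A = \Sigma_A B_A (B_A^{\rm T}\Sigma_A B_A)^{-1} B_A^{\rm T}(v_L - \mu_A)$, hence $\Sigma_A^{-1}(v_L - \mu_A) = B_A (B_A^{\rm T}\Sigma_A B_A)^{-1} B_A^{\rm T}(v_L - \mu_A) \in \spn(B_A)$ for each value of $Y$. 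This is the asserted conclusion up to the centering term $\Sigma_A^{-1}\mu_A$, which is absorbed by the node-centering convention used in the algorithm; running the identical computation with the complementary event $E^c$ gives the claim for $A_R$.

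The step I expect to require the most care — though it is ultimately routine — is making the two inheritance facts measure-theoretically rigorous, i.e., verifying cleanly that restricting the conditioning region from $A$ down to $A_L$ disturbs neither the conditional independence $Y \perp \bX \mid B_A^{\rm T}\bX$ nor the map $\bx \mapsto \E{\bX \mid B_A^{\rm T}\bX = B_A^{\rm T}\bx, \,\cdot\,}$. Both collapse to the single fact that $\{\beta^{\rm T}\bX \le c\}$ is $\sigma(B_A^{\rm T}\bX)$-measurable because $\beta \in \spn(B_A)$; once that is in hand, the remainder is the standard SIR algebra, and the only distributional input used is Assumption \ref{asmp:lcm} \emph{at the parent} $A$ — Assumption \ref{asmp:ccv} plays no role in this particular theorem.
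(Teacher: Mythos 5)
Your proposal is correct and takes essentially the same route as the paper's proof: both condition on $B_A^{\rm T}\bX$ via the tower property, use the fact that $\{\beta^{\rm T}\bX\le c\}$ is a function of $B_A^{\rm T}\bX$ (since $\beta\in\spn(B_A)$) to drop the child-node restriction and $Y$ from the inner expectation, invoke the linearity condition at the parent to obtain the projection form $P_{B_A}(\Sigma_A)^{\rm T}$, and read off membership in $\spn(B_A)$ from the resulting fixed-point identity after multiplying by $\Sigma_A^{-1}$. The paper likewise centers at the node mean (its ``WLOG $\E{\bX\mid\bX\in A}=0$'' is your absorbed centering term), so the two arguments differ only in presentation.
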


\begin{thm}\label{thm:save}
Assume that model (\ref{eq:sdr_model}) is satisfied in the parent node $A$, i.e., $Y \perp \bX \mid B_A^{\rm T}\bX$ for some $B_A$ if $\bX \in A$. Further assume  Assumption \ref{asmp:lcm} and Assumption \ref{asmp:ccv} hold within this node. If we implement a splitting rule of the form $\ind{\beta^{\rm T} \bx \leq c}$, where $\beta \in \spn(B_A)$, then within each child node $A_L = A \cap \{\beta^T \bx \leq c\}$ and $A_R = A \setminus A_L$ we have that
\begin{equation*}
\Sigma_A - \Var{\bX | Y, \bX \in A_L} \in \Sigma_A \spn(B_A) \quad \text{ and } \quad \Sigma_A - \Var{\bX | Y, \bX \in A_R} \in \Sigma_A \spn(B_A).
\end{equation*}
\end{thm}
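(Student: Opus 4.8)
The plan is to pass to the standardized scale inside the node $A$ — where the hypotheses say precisely that $\E{\bX|B_A^{\rm T}\bX}$ and $\Var{\bX|B_A^{\rm T}\bX}$ take the form associated with a dimension reduction subspace — and then note that a cut along a direction $\beta\in\spn(B_A)$ restricts only $B_A^{\rm T}\bX$, so these structures survive into each child node. Throughout I work with the law of $(\bX,Y)$ conditioned on $\bX\in A$; write $\Sigma_A=\Var{\bX|\bX\in A}$, $\bmu_A=\E{\bX|\bX\in A}$, $\bZ=\Sigma_A^{-1/2}(\bX-\bmu_A)$ so $\E{\bZ}=0$ and $\Var{\bZ}=I_p$ inside $A$, and let $P_\eta$ be the orthogonal projection of $\Reals{p}$ onto $\Sigma_A^{1/2}\spn(B_A)$, $Q_\eta=I_p-P_\eta$. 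In the $\bZ$-scale the hypotheses become: $Y\perp\bZ\mid P_\eta\bZ$ inside $A$; $\E{\bZ|P_\eta\bZ,\bX\in A}=P_\eta\bZ$ (Assumption~\ref{asmp:lcm}); and $\Var{\bZ|P_\eta\bZ,\bX\in A}=Q_\eta$ (Assumption~\ref{asmp:ccv} together with $\Var{\bZ}=I_p$ and the law of total variance). Since $\bmu_A$ is constant, $\Sigma_A-\Var{\bX|Y,\bX\in A_L}=\Sigma_A^{1/2}\bigl(I_p-\Var{\bZ|Y,\bX\in A_L}\bigr)\Sigma_A^{1/2}$, so the claim is equivalent to $\spn\bigl(I_p-\Var{\bZ|Y,\bX\in A_L}\bigr)\subseteq\spn(P_\eta)$ and the analogous statement for $A_R$.

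The single structural input is that $\beta\in\spn(B_A)$ gives $\beta=B_A\gamma$, hence $\beta^{\rm T}\bX=\gamma^{\rm T}(B_A^{\rm T}\bX)$, so the incremental cut $\{\beta^{\rm T}\bX\le c\}$ — and therefore $\{\bX\in A_L\}$, read inside the law conditioned on $\bX\in A$ — is $\sigma(P_\eta\bZ)$-measurable. I would extract two consequences. First, conditioning further on a $\sigma(P_\eta\bZ)$-measurable event does not alter conditional moments given $P_\eta\bZ$, so on $A_L$ one still has $\E{\bZ|P_\eta\bZ,\bX\in A_L}=P_\eta\bZ$ and $\Var{\bZ|P_\eta\bZ,\bX\in A_L}=Q_\eta$. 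Second, $Y\perp\bZ\mid P_\eta\bZ$ inside $A$ implies $Y\perp Q_\eta\bZ\mid P_\eta\bZ$ inside $A$, and this conditional independence likewise survives conditioning on $\{\bX\in A_L\}$; hence $\E{Q_\eta\bZ|P_\eta\bZ,Y,\bX\in A_L}=\E{Q_\eta\bZ|P_\eta\bZ,\bX\in A_L}=Q_\eta P_\eta\bZ=0$ and $\Var{Q_\eta\bZ|P_\eta\bZ,Y,\bX\in A_L}=Q_\eta$.

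Next I would fix $Y=y$ in the support of $Y$ restricted to $A_L$ and apply the law of total variance with inner conditioning on $P_\eta\bZ$, writing $V_y:=\Var{\bZ|Y=y,\bX\in A_L}$. The inner mean is $P_\eta\bZ+\E{Q_\eta\bZ|P_\eta\bZ,Y=y,\bX\in A_L}=P_\eta\bZ$, so its outer variance is $\Var{P_\eta\bZ|Y=y,\bX\in A_L}=P_\eta V_y P_\eta$; the inner variance equals $\Var{Q_\eta\bZ|P_\eta\bZ,Y=y,\bX\in A_L}=Q_\eta$. Adding the pieces, $V_y=Q_\eta+P_\eta V_y P_\eta$, so $I_p-V_y=P_\eta-P_\eta V_y P_\eta=P_\eta(I_p-V_y)P_\eta$ has column space inside $\spn(P_\eta)=\Sigma_A^{1/2}\spn(B_A)$. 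Undoing the standardization yields $\Sigma_A-\Var{\bX|Y=y,\bX\in A_L}\in\Sigma_A\spn(B_A)$ for a.e.\ $y$, which is the first assertion; since $A_R=A\cap\{\beta^{\rm T}\bx>c\}$ is also $\sigma(P_\eta\bZ)$-measurable, the identical argument gives the second. (The same skeleton, retaining only Assumption~\ref{asmp:lcm} and using $\E{\bZ|Y,\bX\in A_L}=\E{P_\eta\bZ|Y,\bX\in A_L}\in\spn(P_\eta)$, reproduces Theorem~\ref{thm:sir}.)

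The step I expect to demand the most care is the measure-theoretic bookkeeping behind the two consequences above: I must state explicitly that ``$\{\bX\in A_L\}$ is $\sigma(P_\eta\bZ)$-measurable'' is meant relative to the law conditioned on $\bX\in A$ — the node $A$ itself need not (and need not be assumed to) be a function of $B_A^{\rm T}\bX$; only the incremental cut must be, which is exactly where $\beta\in\spn(B_A)$ enters — and that ``condition additionally on $Y$ without disturbing the conditional moments given $P_\eta\bZ$'' genuinely uses $Y\perp Q_\eta\bZ\mid P_\eta\bZ$ rather than the model's $Y\perp\bX\mid B_A^{\rm T}\bX$ at face value. Interpreting the $Y=y$ conditioning through regular conditional distributions and discarding degenerate cuts with $P(\bX\in A_L)=0$ are the remaining routine points.
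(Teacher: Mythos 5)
Your proposal is correct and follows essentially the same route as the paper's proof: a law-of-total-variance decomposition with inner conditioning on the reduced coordinates, using that the cut $\{\beta^{\rm T}\bx\le c\}$ is measurable with respect to $B_A^{\rm T}\bX$ and that $Y\perp\bX\mid B_A^{\rm T}\bX$ removes both $Y$ and the cut from the inner conditional moments. The only difference is cosmetic: you work in the standardized scale with the orthogonal projection $P_\eta=\Sigma_A^{1/2}B_A(B_A^{\rm T}\Sigma_A B_A)^{-1}B_A^{\rm T}\Sigma_A^{1/2}$ and rederive $\Var{\bZ\mid P_\eta\bZ}=I_p-P_\eta$ from the total-variance identity, whereas the paper works in the original scale with the oblique projection $P_{B_A}(\Sigma_A)$ and cites this fact from the dimension reduction literature.
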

\noindent These theorems combined with known results from the dimension reduction literature, Corollary 3.1 and Proposition 5.1 in \citet{bing2018}, imply the ability of SIR (SAVE) to recover directions within $\SxAy$ when applied in the child nodes.

Note that Theorem~\ref{thm:sir} and Theorem~\ref{thm:save} still hold when $A_L$ and $A_R$ are replaced by any ancestral node $A_C \subseteq A$ constructed by repeatedly applying splitting rules of the form $\ind{\beta^{\rm T}\bx \leq c}$, where each $\beta \in \spn(B_A)$. The consequence of this observation is important: When the linearity and constant variance conditions hold for the covariates' joint distribution at the tree's root node, where $A = \Reals{p}$, then the DRT's splitting directions $\beta$ are within the central subspace $\Sxy$. In other words, we do not need to repeatedly re-check the linearity and constant variance conditions at each internal node for valid inference.

\section{Simulation Studies} \label{sec:simulation}

We provide several simulation studies to evaluate the performance of dimension reduction forests. The studies aim to evaluate the DRF's predictive performance and analyze our method's accuracy in generating local subspace variable importances.

\subsection{Predictive Performance}\label{subsec:predict}

We compared DRFs with two major competitors: traditional random forests as implemented in the {\tt scikit-learn} Python package and Nadaraya-Watson kernel estimators (NW Kernel). We also included two variations on the traditional random forest (SIR + RF and SAVE + RF) and the Nadaraya-Watson kernel estimator (SIR + NW Kernel and SAVE + NW Kernel). As their names suggest, these two variants first extract global SDR predictors using SIR or SAVE, and then train a random forest or an Nadaraya-Watson kernel estimator using only these features. This approach of using global SDR predictors within a kernel estimator is common practice, see \citet{adragni2009}. We included these methods to demonstrate the benefit of the recursive dimension reduction performed by DRFs.

We evaluated predictive performance using 50 repeated train-test splits on seven known regression functions. Each split contained $n_{train} = 2000$ training samples and $n_{test} = 1000$ testing samples. We recorded each method's mean squared error (MSE) on the test set. The simulations were repeated 50 times with different random seeds. Since machine learning methods are sensitive to hyperparameter settings, it is crucial to explore a large space of values. Table~\ref{tab:param_tune} in the Supplementary Materials lists all hyperparameter settings used in our experiments. To compare each method fairly, we reported the lowest test error achieved by each method over all parameter settings. This procedure differs from the standard practice of using cross-validation to choose the hyperparameters before evaluating on an independent test set. Our approach mitigates the impact of parameter tuning by reporting each method's best performance over all parameter settings.

For the predictive task, we considered the following four regression functions:

\begin{itemize}[nosep]
\item {\it Simulation 1:} $Y = 20 \max\left\{e^{-2 (X_1 - X_2)^2},\ 2 e^{-0.5(X_1^2 + X_2^2)}, \ e^{- (X_1 + X_2)^2}\right\} + \varepsilon$,
where $\bX_i \iidsim U[-3, 3]^{5}$ and $\varepsilon_i \iidsim N(0, 1)$. We presented this regression function in the motivation.

\item {\it Simulation 2:} $Y = 20 \max\left\{e^{-18 X_1^2},\ e^{-18 X_2^2}, \ 1.75 e^{-20 (X_1 + X_2)^2}, \ 1.75 e^{-20 (X_1 - X_2)^2} \right\} + \varepsilon$,
where $\bX_i \iidsim U[-1, 1]^{5}$ and $\varepsilon_i \iidsim N(0, 1)$.

\item {\it Simulation 3:} $Y = (\bX^{\rm T} \beta_1)^2 + (\bX^{\rm T} \beta_2)^2 + 0.5 \varepsilon$. Let $\mathbf{0}_n$ denotes an $n$-dimensional vector of zeros. We generate $\bX_i \overset{\text{iid}}\sim N(\mathbf{0}_{12}, \Sigma)$ and set $\Sigma_{ij} = 0.5^{\abs{i-j}}$ resulting in  moderate correlation between the covariates. The coefficients are $\beta_1 = (1, 1, 1, 1, 1, 1, \mathbf{0}_6^{\rm T})^{\rm T} / \sqrt{6}$ and $\beta_2 = (1, -1, 1, -1, 1, -1, \mathbf{0}_6^{\rm T})^T / \sqrt{6}$. Lastly, $\varepsilon_i \iidsim N(0, 1)$.

\item {\it Simulation 4:} $Y = \bX^{\rm T} \beta_1 (\bX^{\rm T} \beta_2)^2 + (\bX^{\rm T} \beta_3)(\bX^{\rm T} \beta_4) + 0.5 \varepsilon$,
where $\bX_i \overset{\text{iid}}\sim N(\mathbf{0}_{10}, \Sigma)$, $\Sigma_{ij} = 0.5^{\abs{i-j}}$, $\beta_1 = (1, 2, 3, 4, \mathbf{0}_6^{\rm T})^{\rm T} / \sqrt{30}$, $\beta_2 = (-2, 1, -4, 3, 1, 2, \mathbf{0}_4^{\rm T})^T$ $/ \sqrt{35}$, $\beta_3 = (\mathbf{0}_4^{\rm T}, 2, -1, 2, 1, 2, 1)^T / \sqrt{15}$, and $\beta_4 = (\mathbf{0}_6^{\rm T}, -1, -1, 1, 1)^{\rm T} / 2$. Lastly, $\varepsilon_i \iidsim N(0, 1)$.
\end{itemize}
We also include the three Friedman functions first introduced in \citet{friedman1991}. We provide the Friedman functions' definitions in Section~\ref{simulation_details} of the Supplementary Materials.

We designed the first two simulations to test the DRF's ability to adapt to local dimension reduction structure. Simulation 1 is the motivating example in Section~\ref{sec:motivation}. Simulation 2 contains four different local one-dimensional subspaces; two are axis-aligned, and two lie in a rotated coordinate system. The next two simulations are commonly used to test SDR methods. We included these regression functions to compare the DRF with a standard random forest trained on SIR and SAVE input features. Finally, Friedman 1--3 are classic regression problems used to test the performance of random forest methods. Here, our goal is to compare DRFs with RFs on problems where DRFs have no clear advantage.

Table \ref{tab:sim_results_pred} displays each method's mean percentage (100\%) improvement over the RF baseline and the corresponding standard deviations for the seven simulation scenarios. The best performing method for each scenario is shaded in gray. A negative value indicates degraded performance, a value of zero indicates no improvement, and a positive value indicates an improvement. Overall, the proposed DRF model performs well in all simulations.

\begin{table}[tbp]
    \resizebox{\columnwidth}{!}{
    \begin{tabular}{@{}llllllll@{}} \toprule
    & DRF & SIR + RF & SAVE + RF& NW Kernel & SIR + NW Kernel & SAVE + NW Kernel \\
    \bottomrule[\lightrulewidth]
Simulation \#1 & \cellcolor{bestcolor!25} ${23.51\ (6.12)}$ & $-1180.23\ (221.62)$ & $-62.95\ (43.50)$ & $-1116.26\ (107.69)$ & $-1120.81\ (108.71)$ & $-1119.24\ (107.56)$ \\
Simulation \#2 & \cellcolor{bestcolor!25} ${63.89\ (6.11)}$ & $-552.69\ (120.45)$ & $8.55\ (25.28)$ & $-585.66\ (95.20)$ & $-587.05\ (95.87)$ & $-586.41\ (95.24)$ \\
Simulation \#3 & $69.76\ (5.29)$ & $-155.37\ (62.01)$ & \cellcolor{bestcolor!25} ${71.30\ (5.02)}$ & $-29.21\ (15.06)$ & $-197.03\ (40.96)$ & $-29.21\ (15.06)$ \\
Simulation \#4 & \cellcolor{bestcolor!25} ${28.11\ (7.11)}$ & $-41.96\ (18.88)$ & $6.86\ (14.09)$ & $-19.60\ (10.91)$ & $-31.81\ (12.63)$ & $-21.92\ (12.89)$ \\
Friedman \#1 & \cellcolor{bestcolor!25} ${25.73\ (3.29)}$ & $-66.33\ (11.77)$ & $1.43\ (10.80)$ & $-123.96\ (13.86)$ & $-124.39\ (13.89)$ & $-124.37\ (13.89)$ \\
Friedman \#2 & \cellcolor{bestcolor!25} ${35.67\ (10.22)}$ & $-613.46\ (178.10)$ & $-43.85\ (34.41)$ & $-2158.87\ (264.96)$ & $-2162.95\ (265.29)$ & $-2161.14\ (264.99)$ \\
Friedman \#3 & \cellcolor{bestcolor!25} ${0.22\ (1.07)}$ & $-1.33\ (1.51)$ & $-1.29\ (1.53)$ & $-0.83\ (1.27)$ & $-0.84\ (1.27)$ & $-0.85\ (1.27)$ \\
    \bottomrule
    \end{tabular}}
    \caption{Percentage (100\%) improvement over the RF baseline: 1 - (MSE of method) / (MSE of random forest). The standard deviation over the 50 runs is displayed in parentheses. Methods with the best improvement (if any) over the RF are shaded in gray.}
    \label{tab:sim_results_pred}
\end{table}

DRFs see some of their largest gains over the competitors in Simulation 1 and Simulation 2. We expected this behavior because these regression surfaces exhibit heterogeneity in the dependent local subspace. Our intuition is that the DRF's recursive estimation of the local subspace allows it to modify its splitting direction based on the prediction point. The fact that the SIR + RF and SAVE + RF baselines, which can only split along the global DR subspace, have a much higher MSE lends credibility to this assertion.

Simulation 3 and Simulation 4 demonstrate that the DRF's performance does not degrade when a global SDR subspace adequately describes the regression surface. In Simulation 3, the DRF performs similarly to the SAVE + RF method due to the presence of a global quadratic response surface. However, the DRF drastically outperforms all methods on Simulation 4, which contains a regression surface that is troublesome for SIR and on which SAVE sees only a modest improvement. We conclude that the DRF is a flexible alternative to trying different SDR methods as inputs to nonparametric regressors.

The last three simulations (Friedman 1--3) show that the DRF performs well even when there is no dimension reduction beyond sparsity. In this case, the DRF performs the best or equivalent to the RF baseline. These results demonstrate that the DRF is a reasonable drop-in replacement for the traditional random forest.

\subsection{Local Subspace Variable Importance}\label{subsec:lsvi_sim}

Next, we evaluated how the LSVI estimates' (Algorithm \ref{alg:lsi_algo}) accuracy scaled with (1) a decreasing signal-to-noise ratio and (2) an increase in the number of uninformative covariates.  We considered the following four regression functions. {\it Simulation 1:} $Y = \abs{X_1} + \abs{X_2} + \varepsilon$, {\it Simulation 2:} $Y = X_1 + X_2^2 + \varepsilon$, {\it Simulation 3:} $Y = 5 \max\left\{e^{-0.25 X_1^2}, e^{-0.25 X_2^2}\right\} + \varepsilon$, and {\it Simulation 4:} $Y = 20 \max\left\{e^{-2 (X_1 - X_2)^2}, \ 2 e^{-0.5(X_1^2 + X_2^2)}, \ e^{- (X_1 + X_2)^2}\right\} + \varepsilon$. For each simulation, we drew $\bX_i \iidsim U[-3, 3]^p$ and $\varepsilon_i \iidsim N(0, \sigma^2)$.

We chose these functions because they exhibit a variety of local dimension reduction structures. Simulation 1 contains the simplest structure with only four local directions that vary by quadrant in the $X_1$-$X_2$ plane. Simulation 2 contains a simple analytic gradient, $(1, X_2)^{\rm T}$, that varies along a single direction. This simulation tested whether the LSVI estimates capture smoothly varying local structures. Simulation 3 is composed of a cross of simple one-dimensional axis-aligned subspaces. Simulation 4 is the function in the motivating example, which includes a highly variable local structure.

To evaluate the LSVI estimates' accuracy, we generated $n = 2000$ samples from each regression function and selected $n_{test} = 100$ points uniformly at random to test the LSVIs. We recorded the trace correlation between the estimated LSVIs and the normalized gradients at the randomly chosen test points. We repeated each simulation 50 times with different random seeds. To evaluate the effect of a degrading signal-to-noise ratio, we performed the experiments with $p = 10$  covariates and set $\sigma^2$ so that the signal-to-noise ratio was 5:1, 3:1, 3:2, 1:1, and 3:4. We define the signal-to-noise ratio as $\Var{\E{Y | \bX}} \, / \, \sigma^2$.

We compared the LSVIs generated by the following four methods: DRF, SAVE, SIR, and Local SIR. LSVI estimation is sensitive to the DRF's minimum leaf node size. We searched over $n_{min} \in \set{3, 10, 25, 50, 100}$ at each prediction point and reported the best direction in terms of trace correlation to mitigate this effect. Furthermore, we set $\mtry = 5$ for all simulations. The SIR (SAVE) procedure always selects the first principal SDR direction estimated with the standard SIR (SAVE) algorithm with 10 slices. We included these benchmarks to ensure that local structure informs the LSVI algorithm. Local SIR is related to the KNN-SIR method proposed in~\citet{wang2009}. For each prediction point, we selected $k \in \set{\max(10, p), 25, 50, 100}$ of the prediction point's nearest neighbors and estimated the leading SIR direction restricted to these neighbors. We recorded the best trace correlation obtained by the four directions, which correspond to each value of $k$.

Figure \ref{fig:lsvi_results_trcor_stn_p10} displays boxplots of the trace correlations achieved by the four methods at $p = 10$ while the signal-to-noise ratio varied. For most scenarios, the DRF procedure performs best with slowly degrading performance as the signal-to-noise ratio decreases. In particular, as the signal-to-noise ratio decrease, the DRF based estimates degenerate to the performance of SAVE. This behavior is especially evident in Scenario 2, where the principal SAVE direction is highly informative. In conclusion, LSVIs produced by DRFs adequately describe the local dimension reduction structure, especially when the signal-to-noise ratio is greater than one.

We include the analysis of how LSVI estimation performed as the number of uninformative covariates increased in Section~\ref{sec:add_figs} of the Supplementary Materials. Our analysis indicated that LSVIs are relatively robust to the number of uninformative covariates, unlike Local SIR, which performed poorly due to the curse of dimensionality.

\begin{figure}[tbp]
    \centering
    \includegraphics[width=\textwidth]{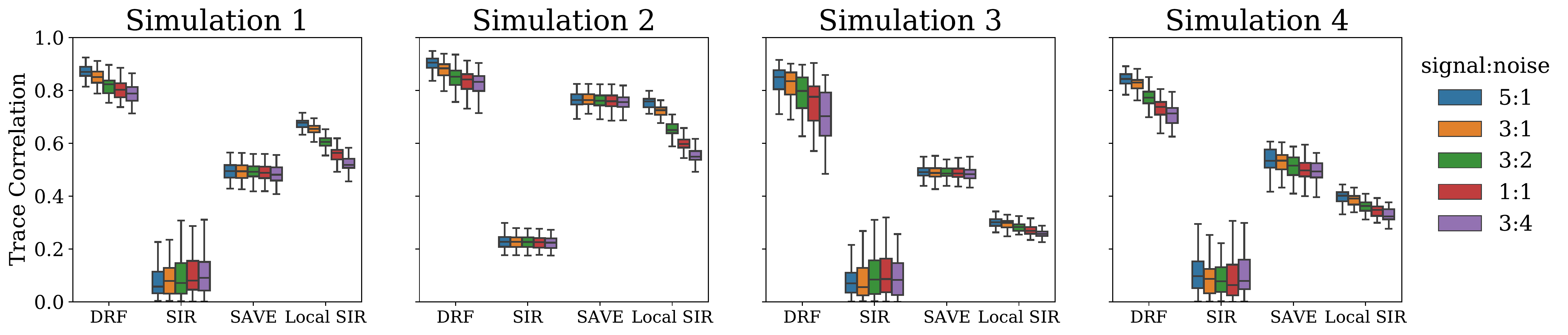}
    \caption{Boxplots of the trace correlations between the local gradients and the LSVIs estimated by the four methods outlined in Section \ref{subsec:lsvi_sim} for different signal-to-noise ratios. Higher values of the trace correlation are better.}
    \label{fig:lsvi_results_trcor_stn_p10}
\end{figure}

\section{Real Data Applications}\label{sec:real_data}

In this section, we focus on the empirical performance of dimension reduction forests on real data sets. To illustrate our algorithm's predictive value, we applied DRFs to a variety of real-world regression tasks. Furthermore, we used the LSVIs produced by a DRF to understand the monthly variation of PM2.5 concentration in Beijing, China.

\subsection{Predictive Performance on Real Data Sets}

We applied the dimension reduction forest algorithm to 12 regression data sets taken from the UCI Machine Learning Repository \citep{ucirepo} and OpenML \citep{openml}. The 12 data sets represent a variety of regression tasks on real-valued covariates. Table \ref{tab:dataset_stats} in the Supplementary Materials displays each data set's sample size and the number of features. We compared the DRF's predictive performance to the same six methods included in Section~\ref{subsec:predict}. For each data set, we performed 15 different rounds of 10-fold cross-validation. We calculated the average out-of-sample MSE over the 10-folds and recorded the percentage (100\%) improvement over the random forest baseline. As before, we recorded the best value over the range of hyperparameter settings displayed in Table \ref{tab:param_tune} of the Supplementary Materials. Table \ref{tab:real_data} contains the results for the 12 data sets.

\begin{table}[tbp]
    \scriptsize
    \resizebox{\columnwidth}{!}{
    \begin{tabular}{@{}llllllll@{}} \toprule
    & DRF & SIR + RF & SAVE + RF & NW Kernel & SIR + NW Kernel & SAVE + NW Kernel \\
    \bottomrule[\lightrulewidth]
Abalone & \cellcolor{bestcolor!25} ${5.92\ (0.42)}$ & $4.66\ (0.51)$ & $3.48\ (0.87)$ & $-18.79\ (0.49)$ & $-27.64\ (0.52)$ & $-66.74\ (1.95)$ \\
Auto Price & $-10.32\ (3.89)$ & $-58.32\ (18.19)$ & $-188.25\ (31.94)$ & $-58.89\ (10.41)$ & $-62.45\ (13.99)$ & $-210.93\ (28.86)$ \\
Bank8FM & \cellcolor{bestcolor!25} ${13.23\ (0.34)}$ & $6.13\ (0.79)$ & $6.82\ (0.31)$ & $-135.67\ (1.06)$ & $-135.85\ (1.06)$ & $-135.83\ (1.05)$ \\
Body Fat & \cellcolor{bestcolor!25} ${13.20\ (3.17)}$ & $-20.59\ (8.03)$ & $-227.61\ (48.19)$ & $-291.88\ (14.00)$ & $-291.88\ (14.00)$ & $-913.51\ (123.63)$ \\
CPU Small & $-14.67\ (1.22)$ & $-48.51\ (3.60)$ & $-36.89\ (3.20)$ & $-57.84\ (3.71)$ & $-65.71\ (7.01)$ & $-148.87\ (7.47)$ \\
Fish Catch & \cellcolor{bestcolor!25} ${17.46\ (9.86)}$ & $9.33\ (12.77)$ & $-110.71\ (40.41)$ & $-47.11\ (15.70)$ & $-746.05\ (230.58)$ & $-2249.80\ (353.29)$ \\
Kin8nm & $50.49\ (0.25)$ & $24.03\ (1.01)$ & \cellcolor{bestcolor!25} ${50.94\ (0.31)}$ & $33.04\ (0.34)$ & $32.81\ (0.34)$ & $32.96\ (0.35)$ \\
Liver & \cellcolor{bestcolor!25} ${0.30\ (0.87)}$ & $-4.10\ (1.76)$ & $-5.16\ (1.44)$ & $-13.54\ (2.28)$ & $-13.67\ (2.40)$ & $-17.16\ (2.12)$ \\
Mu284 & \cellcolor{bestcolor!25} ${20.96\ (2.22)}$ & $-8.58\ (9.69)$ & $-2.86\ (10.07)$ & $-13.78\ (3.28)$ & $-69.76\ (8.78)$ & $-165.74\ (19.20)$ \\
Puma32H & \cellcolor{bestcolor!25} ${6.07\ (0.32)}$ & $-991.25\ (20.71)$ & $-175.48\ (3.56)$ & $-1757.70\ (12.61)$ & $-1761.43\ (13.30)$ & $-1757.70\ (12.61)$ \\
Puma8NH & \cellcolor{bestcolor!25} ${2.10\ (0.25)}$ & $0.79\ (0.33)$ & $0.57\ (0.44)$ & $-33.84\ (0.69)$ & $-33.88\ (0.69)$ & $-33.87\ (0.69)$ \\
Wisconsin & \cellcolor{bestcolor!25} ${3.53\ (0.84)}$ & $-0.92\ (2.32)$ & $-3.18\ (2.36)$ & $-8.36\ (1.25)$ & $-8.36\ (1.25)$ & $-9.36\ (1.13)$ \\
    \bottomrule
    \end{tabular}}
    \caption{Percentage (100\%) improvement over the RF baseline: 1 - (MSE of method) / (MSE of random forest). The standard deviation over the 15 runs is displayed in parentheses. Methods with the best improvement (if any) over the RF are shaded in gray.}
    \label{tab:real_data}
\end{table}

Overall, the DRF performs better or equivalent to the RF baseline on most data sets. Specifically, the DRF outperforms the RF on 9 out of 12 data sets. The largest improvements occur either at small sample sizes or where SIR (SAVE) features already result in an improvement. Furthermore, the DRF also saw an improvement in cases where including SIR (SAVE) features hindered the RF's performance, e.g., on Puma32H and Wisconsin. The only times the DRF method performed worse than the RF is when the SDR features also performed poorly. We conclude that DRFs are a simple alternative to complicated SDR feature engineering with the added benefit that they often improve performance due to their ability to adapt to local structures.

\subsection{PM2.5 Concentration in Beijing, China}\label{subsec:beijing}

Finally, we demonstrate how to use the LSVIs generated by the DRF algorithm to understand the seasonal variation of PM2.5 (particulate matter) concentration ($\mu g/ m^3$) in the Chaoyang district of Beijing, China. The raw data was collected by the Urban Air project (Urban Computing Team, Microsoft Research) and consists of meteorological and air quality data collected hourly by 437 air quality stations located in 43 Chinese cities.  The data was collected over one year, from May 1st, 2014 to April 30th, 2015. We limited the analysis to observations recorded in Beijing's Chaoyang district because it is the largest and most populous district. For more details on this data set, see \citet{zheng2015}.

The analysis's goal is to quantify how different meteorological covariates influence the predictions of PM2.5 concentration throughout the year. We included five covariates in the analysis: \texttt{month}, \texttt{temperature} ($^\circ$C), (atmospheric) \texttt{pressure} (hPa), \texttt{humidity} (\%), and \texttt{wind speed} (m/s). We coded the month as an integer from 1 to 12, starting with January. Furthermore, we standardized all covariates so that they have mean zero and unit variance. The final processed data set contained $n = 25,794$ observations with $p = 5$ covariates. For reference, the pairwise scatter plots and marginal histograms of the covariates used in the analysis are displayed in Figure \ref{fig:beijing_pairplot} of the Supplementary Materials.

We fit a dimension reduction forest to this data set with $M = 500$ trees and no feature screening. We selected 3 as the minimum leaf node size using an 80\%--20\% train-test split to search over $n_{min} \in \set{3, 5, 10}$. Then, we re-fit the DRF to the full data set with $n_{min} = 3$. The model's in-sample $R^2$ is 0.92, indicating a good fit to the data.

To discern how the covariates affected the prediction of PM2.5 concentration, we plotted the LSVI loadings' marginal distributions in Figure~\ref{fig:beijing_importances}. Since the LSVI estimates are invariant to sign changes, we forced the temperature loadings' sign positive. The largest loadings are on month, humidity, and temperature. Air pressure and wind speed play a less significant role. For comparison, we included the global permutation-based importance extracted from a traditional random forest and a DRF in Figure \ref{fig:beijing_imp} of the Supplementary Materials. These results corroborate wind speed's lack of global importance.

\begin{figure}[tbp]
    \centering
    \includegraphics[width=\textwidth]{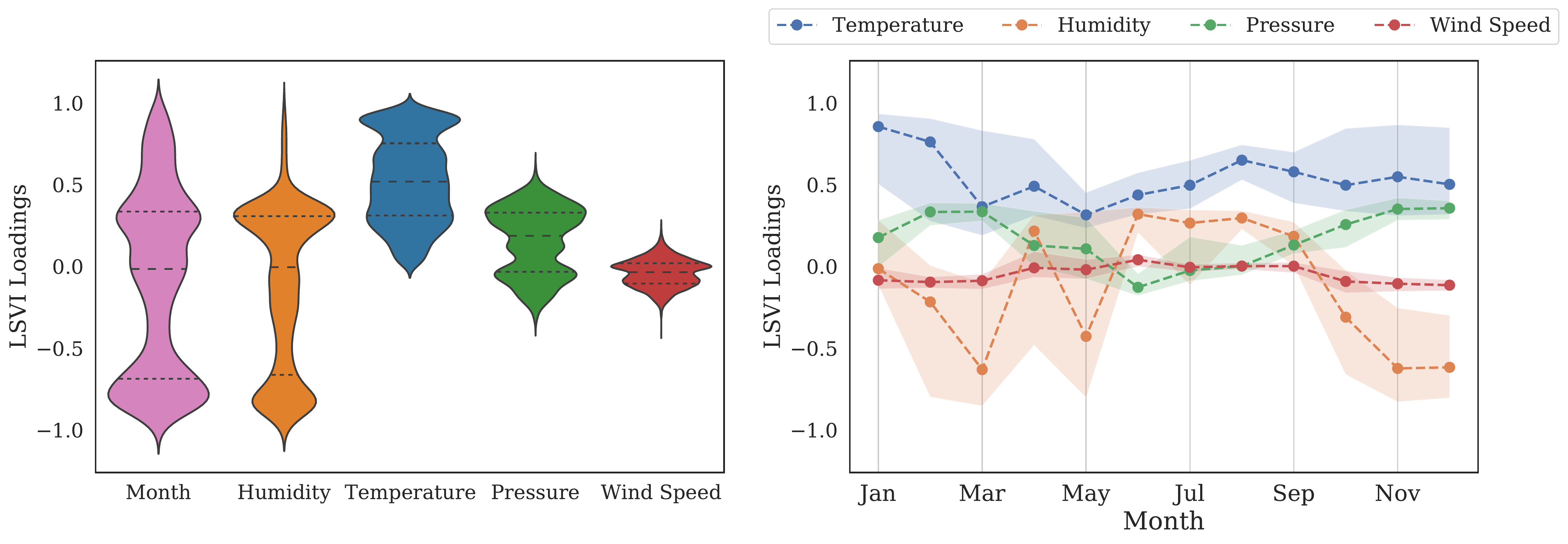}
    \caption{Violin plot of each LSVIs' marginal distribution with the temperature loading's sign forced positive (left). Dashed lines indicate the median and dotted lines are the interquartile range. The distribution of the LSVIs loadings as a function of month (right). Dotted lines indicate the median and shaded regions denote the interquartile range.}
    \label{fig:beijing_importances}
\end{figure}

To assess how these effects varied by month, we plotted the loadings' distributions as a function of month in Figure~\ref{fig:beijing_importances}. During the summer months (Jun, Jul, Aug, and Sep), the regression function primarily varies along the subspace spanned by two meteorological variables: \texttt{temperature} + \texttt{humidity}. In contrast, the regression function varies along the subspace spanned by \texttt{temperature} + \texttt{pressure} - \texttt{humidity} in the remaining months. This finding indicates that PM2.5 concentration's cause varies by season~\citep{chen2020}.

\section{Discussion}\label{sec:discussion}

This work proposed dimension reduction forests, a new nonparametric estimator that can adapt to local dimension reduction structure and measure local variable importance through a novel application of sufficient dimension reduction. We presented theory on the sufficient conditions for using SDR within the forest's splitting rule. A simulation study demonstrated that dimension reduction forests often outperform traditional random forests in predictive tasks, especially when the regression function contains a local dimension reduction structure. Also, we demonstrated the accuracy of our proposed local subspace variable importance procedure. Finally, we established our model's effectiveness on 12 real-world regression tasks and used it to study seasonal variation in PM2.5 concentration in Beijing.

The use of sufficient dimension reduction within random forests opens up an exciting line of future work on designing more efficient adaptive kernels. For example, although SIR and SAVE are robust to violations of the linearity and constant variance conditions, these assumptions can be removed entirely using computationally expensive semiparametric estimators~\citep{ma2012}. Furthermore, high-dimensional problems pose a challenge for our method, which we accounted for through an optional variable screening step. Another approach is to utilize sparse SDR estimators~\citep{lin2018, lin2019}. However, the computational complexity and sample size requirements of these two approaches are prohibitive for most practical problems when used within a dimension reduction forest. Modifications to improve the scalability of the aforementioned estimators within a DRF are an area of research interest. Further research directions include establishing the asymptotic properties of DRFs and their associated local subspace variable importance measure. Regardless, dimension reduction forests remain a flexible and interpretable nonparametric estimator. A repository containing all data sets and scripts used to run the analyses in this article is available on GitHub~\citep{drforest2021}.

%\bigskip
%\begin{center}
%{\large\bf SUPPLEMENTARY MATERIAL}
%\end{center}

%\begin{description}
%
%%\item[Supplementary Information:]  We provide an overview of sufficient dimension reduction, a computational complexity analysis of Algorithm~\ref{alg:drf}, proofs of the theoretical results, further details on the simulation study and data sets, additional simulations assessing LSVI estimation, and additional figures from the real data analysis. (pdf file)
%
%\item[Python-package for DRFs:] Python-package {\tt drforest} is available at~\url{https://github.com/joshloyal/drforest}. The package contains all data sets and scripts used to run the analyses in this article. (GitHub repository)
%
%\end{description}

\bibliographystyle{asa}
\bibliography{reference}

\newpage
\clearpage
\baselineskip=18pt

\setcounter{page}{1}
\setcounter{section}{0}
\setcounter{equation}{0}
\setcounter{table}{0}
\setcounter{figure}{0}
\pagenumbering{arabic}
\renewcommand{\thesubsection}{S.\arabic{subsection}}
\renewcommand\theequation{S.\arabic{equation}}
\renewcommand\thetable{S.\arabic{table}}
\renewcommand\thefigure{S.\arabic{figure}}
\renewcommand{\thealgorithm}{S.\arabic{algorithm}}
\renewcommand{\thethm}{S.\arabic{thm}}
\renewcommand{\thecorollary}{S.\arabic{corollary}}

\begin{center}
{\Large\textbf{Supplementary Material for \\
``Dimension Reduction Forests: Local Variable Importance using Structured Random Forests"}} \\

\smallskip

\if0\blind
{
{\large Joshua Daniel Loyal, Ruoqing Zhu, Yifan Cui, and Xin Zhang}
}\fi
\if1\blind
{
\bigskip
}\fi
\end{center}

\subsection{Overview of Inverse Regression Methods} \label{subsec:sir_save_algo}

In this section, we detail the SIR and SAVE algorithms used to estimate the central subspace in a dimension reduction tree. Both methods fall under the framework of inverse regression and can be formulated as solving the following generalized eigenvalue problem:
\begin{equation*}
\Lambda B = \lambda \Sigma B,
\end{equation*}
where $\Sigma$ is the covariance matrix of $\bX$, the columns of $B$ span at least part of the central subspace, and $\Lambda$ is a method specific matrix that is a function of the moments of $\bX \mid Y$. Since these estimators reverse the usual dependence of $Y$ on $\bX$, they are known as inverse regression methods in the literature. SIR uses first order moment information with
\begin{equation}\label{eq:sir}
\Lambda_{\text{SIR}} = \Cov{\E{\bX|Y}}.
\end{equation}
It is easy to see that SIR fails to recover $B$ when $\E{\bX|Y} = 0$, e.g., when $\E{\bX|Y}$ is symmetric about $\E{\bX}$. A solution proposed in SAVE is to utilize second moment information. In particular, SAVE uses the matrix
\begin{equation}\label{eq:save}
\Lambda_{\text{SAVE}} = \E{(I_p - \Var{\bX | Y})^2},
\end{equation}
where $I_p$ is the $p$-dimensional identity matrix. Algorithm \ref{alg:sir} outlines the SIR algorithm, and Algorithm \ref{alg:save} details the SAVE algorithm. Note that the linearity condition is sufficient for SIR to recover a direction in $\Sxy$, while SAVE requires both the linearity condition and the constant variance condition.

\begin{algorithm}[htbp]
\caption{Sliced Inverse Regression}\label{alg:sir}
\begin{algorithmic}[1]
\Procedure{SolveSIR}{data set $\dataset = \{\bX_i, Y_i\}_{i=1}^n$}
\State $U, \hat{\mu} \leftarrow $ \textproc{Center}($X$)
\State $Q, R \leftarrow$ \textproc{QRDecomposition}($U$)
\State $Z \leftarrow \sqrt{n} \ Q$
\Comment Whiten the data.
\For{slice $S_k$ in \textproc{GetSlices}($Z$, \texttt{n\_slices})}
\State $\hat{\mathbf{\mu}}_{S_k} \leftarrow \frac{1}{|S_k|}\sum_{i=1}^n \mathbf{Z}_i \ind{\mathbf{Z}_i \in S_k}$
\EndFor
\State $\hat{\Lambda}_{\text{SIR}} \leftarrow \sum_{k=1}^{\texttt{n\_slices}} \frac{|S_k|}{n} (\hat{\mu}_{S_k} - \hat{\mu})(\hat{\mu}_{S_k} - \hat{\mu})^T$
\Comment See Equation (\ref{eq:sir}).
\State Solve $\hat{\Lambda}_{\text{SIR}} \hat{\Gamma} = \hat{\lambda} \hat{\Gamma}$
\State $\hat{B} \leftarrow (\sqrt{n} R)^{-1}\hat{\Gamma}$
\Comment Solved via back-substitution.
\State \textbf{output} Estimated directions $\hat{B}$ and eigenvalues $\hat{\lambda}$.
\EndProcedure
\end{algorithmic}
\textproc{Center} calculates the empirical feature-wise means $\hat{\mu}$ and centers the data: $\mathbf{U}_i = \bX_i - \hat{\mu}$. \textproc{GetSlices} splits the rows of $Z$ into \texttt{n\_slices} contiguous segments with roughly equal sample-size. Note that $Z$ is assumed to be presorted in terms of the target $Y$.
\end{algorithm}

\begin{algorithm}[htbp]
\caption{Sliced Average Varience Estimation}\label{alg:save}
\begin{algorithmic}[1]
\Procedure{SolveSAVE}{data set $\dataset = \{\bX_i, Y_i\}_{i=1}^n$}
\State $U, \hat{\mu} \leftarrow $ \textproc{Center}($X$)
\State $Q, R \leftarrow$ \textproc{QRDecomposition}($U$)
\State $Z \leftarrow \sqrt{n} \ Q$
\Comment Whiten the data.
\State $\hat{\Lambda}_{\text{SAVE}} \leftarrow 0_{p \times p}$
\For{slice $S_k$ in \textproc{GetSlices}($Z$, \texttt{n\_slices})}
\State $\tilde{Z}_{S_k} \leftarrow $ \textproc{CenterSlice}($Z, S_k$)
\State $\hat{\Lambda}_{\text{SAVE}} \leftarrow \hat{\Lambda}_{\text{SAVE}} + \frac{|S_k|}{n} (I_{|S_k|} - \tilde{Z}_{S_k}^T \tilde{Z}_{S_k})^2$
\Comment See Equation (\ref{eq:save}).
\EndFor
\State Solve $\hat{\Lambda}_{\text{SAVE}} \hat{\Gamma} = \hat{\lambda} \hat{\Gamma}$
\State $\hat{B} \leftarrow (\sqrt{n} R)^{-1}\hat{\Gamma}$
\Comment Solved via back-substitution.
\State \textbf{output} Estimated directions $\hat{B}$ and eigenvalues $\hat{\lambda}$.
\EndProcedure
\end{algorithmic}
\textproc{Center} calculates the empirical feature-wise means $\hat{\mu}$ and centers the data: $\mathbf{U}_i = \bX_i - \hat{\mu}$. \textproc{CenterSlice} is the same as \textproc{Center} but restricted to data in a given slice $S_k$. It returns the centered data for slice $S_k$. \textproc{GetSlices} splits the rows of $Z$ into \texttt{n\_slices} contiguous segments with roughly equal sample-size. Note that $Z$ is assumed to be presorted in terms of the target $Y$.
\end{algorithm}

\clearpage

\subsection{Computational Complexity Analysis}\label{comp_complexity}

An advantage of random forests over other machine learning methods is their relatively fast training speed, so transferring that speed to dimension reduction forests is paramount. We demonstrate that DRF's theoretical computational complexity is comparable to a standard random forest when $p < n$. We only consider DRFs built without the optional screening step in step (a) of Algorithm \ref{alg:drf}; however, we allow the random selection of $\mtry \leq p$ variables at each node. The following theorem quantifies the computational complexity of building a single dimension reduction tree.

\begin{thm}\label{thm:drt_complexity}
Let $C(s)$ denote the computational complexity of forming a single node on a data set with sample size $s$. Assume that the trees are grown such that all possible sample sizes assigned to a child node are equally probable, i.e., uniformly distributed on $\set{1, \dots, n_A}$ where $n_A$ is the number of samples in node $A$. Also, assume $\mtry$ variables are used to estimate each splitting direction. If there exist constants $c_1, c_2 > 0$ such that $c_1 (s\mtry^2 + s \log(s)) \leq C(s) \leq c_2 (s\mtry^2 + s \log(s))$ (for all $s \geq 1$), then the computational complexity for building a dimension reduction tree on a data set with sample size $n$ is $\bigoh(\mtry^2 n \log(n) + \mtry n\log^2(n))$.
\end{thm}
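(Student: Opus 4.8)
The plan is to set up a recursion for the total cost of building the tree in terms of $C(\cdot)$, the per-node cost, and then solve that recursion under the stated distributional assumption on the split sizes. First I would observe that building a DRT on $n$ samples costs $C(n)$ at the root plus the costs of building the two subtrees on the child nodes, whose sizes $L$ and $n-L$ are governed by the assumption that $L$ is uniform on $\set{1,\dots,n}$ (with the convention that a node of size below $n_{min}$ or $p$ is a leaf with $O(1)$ cost, which does not affect the leading order). Writing $\bar C(n) = \E{\text{cost to build a DRT on } n \text{ samples}}$, this gives
\begin{equation*}
\bar C(n) = C(n) + \frac{1}{n}\sum_{\ell=1}^{n}\bigl(\bar C(\ell) + \bar C(n - \ell)\bigr) = C(n) + \frac{2}{n}\sum_{\ell=1}^{n-1}\bar C(\ell),
\end{equation*}
the familiar quicksort-type recurrence. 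By the sandwich hypothesis $c_1(s\mtry^2 + s\log s) \le C(s) \le c_2(s\mtry^2 + s\log s)$, it suffices to solve this recurrence with $C(n)$ replaced by $g(n) := n\mtry^2 + n\log n$ and conclude a matching two-sided bound on $\bar C(n)$.

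Next I would solve the recurrence $\bar C(n) = g(n) + \frac{2}{n}\sum_{\ell=1}^{n-1}\bar C(\ell)$. The standard trick is to multiply through by $n$, write the same identity for $n-1$, and subtract to telescope the sum: this yields $n\bar C(n) - (n-1)\bar C(n-1) = n g(n) - (n-1) g(n-1) + 2\bar C(n-1)$, i.e. $n\bar C(n) = (n+1)\bar C(n-1) + \bigl(n g(n) - (n-1) g(n-1)\bigr)$. Dividing by $n(n+1)$ and setting $a_n = \bar C(n)/(n+1)$ gives $a_n = a_{n-1} + \frac{n g(n) - (n-1)g(n-1)}{n(n+1)}$, so $a_n = a_1 + \sum_{k=2}^{n} \frac{k g(k) - (k-1)g(k-1)}{k(k+1)}$. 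With $g(k) = k\mtry^2 + k\log k$, the increment is $\Theta\!\bigl(\frac{\mtry^2}{n}\bigr) + \Theta\!\bigl(\frac{\log n}{n}\bigr)$ after routine simplification (using $\log k - \log(k-1) = O(1/k)$), so summing over $k \le n$ gives $a_n = \Theta(\mtry^2 \log n + \log^2 n)$, and hence $\bar C(n) = (n+1) a_n = \Theta(\mtry^2 n\log n + n\log^2 n)$, which is the claimed $\bigoh(\mtry^2 n\log n + \mtry n \log^2 n)$ bound (the two forms agree up to the harmless replacement of one $\log n$ factor by $\mtry$, or one simply notes $\log^2 n = O(\mtry \log^2 n)$).

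There are two points that need a little care rather than being purely mechanical. The first is the reduction from the random recurrence to the deterministic one: because the assumption only controls the \emph{distribution} of each child's size and the recursion is recursive in a random quantity, I need to argue (e.g. by induction on $n$, or by noting monotonicity of $\bar C$) that the sandwich bounds on $C$ transfer cleanly to sandwich bounds on $\bar C$, and that the $O(1)$ leaf cost and the fact that splitting actually stops at size $n_{min}$ (or $p$) only change lower-order terms. The second, and the step I expect to be the main obstacle, is verifying the per-node complexity hypothesis is consistent — i.e., confirming that forming one node genuinely costs $\Theta(s\mtry^2 + s\log s)$: the $s\mtry^2$ comes from assembling the $\mtry \times \mtry$ moment matrices for SIR/SAVE and their eigendecompositions (a constant-size eigenproblem once $\mtry$ is the working dimension, or $O(\mtry^3)$ absorbed if one is careful about slicing cost), and the $s\log s$ from sorting the $s$ samples along the candidate projection to scan all thresholds in Equation (\ref{eq:cart}). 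I would state this cost accounting as the justification of the theorem's hypothesis and then let the recurrence analysis above carry the rest. Once the recurrence is solved, no further probabilistic input is needed, so the bulk of the work is the exact telescoping computation, which is routine.
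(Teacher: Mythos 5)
Your proposal is correct and follows essentially the same route as the paper's proof (which adapts Theorem 5.6 of Louppe, 2014): the same quicksort-style average-case recurrence $T(n) = C(n) + \frac{1}{n-1}\sum_{i=1}^{n-1}(T(i)+T(n-i))$, the same multiply-and-subtract telescoping to a harmonic-number sum, and the same final observation that the derived $\Theta(\mtry^2 n\log n + n\log^2 n)$ is stated as $\bigoh(\mtry^2 n\log n + \mtry n\log^2 n)$ by upper-bounding one $\log n$ factor with $\mtry$. Your closing remarks on verifying the per-node cost $C(s) = \Theta(s\mtry^2 + s\log s)$ are the content of the paper's Corollary rather than this theorem, which simply assumes that sandwich as a hypothesis.
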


\begin{proof}
The proof is a modification of the proof of Theorem 5.6 in \citetSup{louppe2014sup}. Let $T(n)$ denote the computational complexity for building a decision tree from a data set $\dataset$ with $n$ samples. For simplicity of the proof, we assume that the decision trees can be fully grown such that each leaf node contains a single sample. In this framework, the decision tree building process is characterized by the following recurrence relation:
\begin{equation*}
\begin{cases}
T(1) = c_1, \\
T(n) = C(n) + T(n_L) + T(n_R),
\end{cases}
\end{equation*}
where $c_1$ is the time complexity to make a leaf node, and $n_L$ and $n_R$ are the number of samples in the left and right child node. The average time complexity assumes $n_L$ and $n_R$ are uniformly distributed on $\{1, \dots, n-1\}$. Thus the recurrence relation in the average case is
\begin{equation*}
\begin{cases}
T(1) = c_1, \\
T(n) = C(n) + \frac{1}{n-1}\sum_{i=1}^{n-1}(T(i) + T(n - i)).
\end{cases}
\end{equation*}
The proof proceeds in standard fashion. Let $C(n) = c_3 (n\mtry^2 + n\log(n))$. By symmetry and multiplying by $(n-1)$ the recurrence becomes
\begin{equation}\label{eq:n1}
(n-1)T(n) = (n-1)c_3 (n \mtry^2 + n\log(n)) + 2 \sum_{i=1}^{n-1}T(i).
\end{equation}
In addition, for $n \geq 3$, substituting $n$ with $n-1$ yields
\begin{equation}\label{eq:n2}
(n-2)T(n-1) = (n-2)c_3((n-1) \mtry^2 + (n-1)\log(n-1)) + 2 \sum_{i=1}^{n-2} T(i).
\end{equation}
Define $S(n) = \frac{T(n)}{n}$. Now subtracting Equation (\ref{eq:n2}) from Equation (\ref{eq:n1}) and dividing by $n(n-1)$ we have
\begin{align*}
S(n) &= S(n-1) + 2 c_3 \frac{\mtry^2}{n} + c_3 \frac{2}{n}\log(n-1) + c_3 \log(\frac{n}{n-1}), \\
    &= c_1' + c_3 \sum_{i=2}^n \frac{2}{i}(\mtry^2 + \log(i-1)) + \log(\frac{i}{i-1}), \\
    &= c_1' + c_3 \log(n) + c_3 \sum_{i=1}^n \frac{2}{i}(\mtry^2 + \log(i-1)), \\
    &\leq c_1' + c_3 \log(n) + 2 c_3 (\mtry^2 + \log(n)) \sum_{i=2}^n \frac{1}{i}, \\
    &= c_1' + c_3 \log(n) + 2 c_3 (\mtry^2 + \log(n)) (H_n - 1), \\
    &= \bigoh(H_n\mtry^2 + H_n \log(n)),
\end{align*}
where $c_1' = c_1 / n$ and $H_n$ is the $n$th harmonic number. To finish the bound we note that $H_n \sim \log(n)$. As a result we see that the average case computational complexity of building this decision tree is
\begin{equation*}
T(n) = \bigoh(\mtry^2 n \log(n) + \mtry n \log^2(n)),
\end{equation*}
where we upper bounded the last term with a multiple of $\mtry$ in order to relate it to the normal decision tree bound.

\end{proof}

As we shortly show, the DRT splitting procedure satisfies the bounds in Theorem~\ref{thm:drt_complexity}. Furthermore, since the construction of dimension reduction forests is linear in the number of trees $M$, we immediately have the following corollary to Theorem \ref{thm:drt_complexity}.

\begin{corollary}\label{cor:drf_complexity}
The computational complexity of the dimension reduction forest algorithm is $\bigoh(M m_{try} n \log(n) (m_{try}+ \log(n)))$.
\end{corollary}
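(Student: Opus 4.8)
The plan is to deduce the corollary from Theorem~\ref{thm:drt_complexity} in two short steps: first, verify that the cost $C(s)$ of forming a single DRT node obeys the two-sided template $c_1(s\mtry^2 + s\log s) \le C(s) \le c_2(s\mtry^2 + s\log s)$ demanded by the hypothesis of that theorem; second, note that Algorithm~\ref{alg:drf} builds $M$ trees with no computation shared across trees, so the forest's training cost is exactly $M$ times the single-tree cost supplied by the theorem.

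For the first step, fix an internal node holding $s$ samples with $\mtry$ active covariates and tally the work in steps (b)--(c) of Algorithm~\ref{alg:drf}: centering/whitening the $s\times\mtry$ design and accumulating the SIR and SAVE moment matrices is dominated by the rank-one outer-product updates and costs $\Theta(s\mtry^2)$; solving the two $\mtry\times\mtry$ generalized eigenproblems costs $O(\mtry^3)$; and, for each of the $O(1)$ leading directions, projecting the $s$ points, sorting the resulting scalars, and sweeping the sorted list to minimize~(\ref{eq:cart}) costs $\Theta(s\log s)$. Since a DRT reverts to axis-aligned splitting whenever the node size drops below $p$, the linear-combination branch is used only when $\mtry\le p\le s$, so $\mtry^3\le s\mtry^2$ there; on the remaining (small) nodes, maintaining pre-sorted feature columns makes each axis-aligned scan over $\mtry$ features cost $O(\mtry s)\le O(s\mtry^2)$, with the one-time presort contributing only the lower-order term $O(\mtry n\log n)$. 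Summing, $C(s)=\Theta(s\mtry^2+s\log s)$, and Theorem~\ref{thm:drt_complexity} then yields a per-tree complexity of $\bigoh(\mtry^2 n\log n + \mtry n\log^2 n) = \bigoh\!\big(\mtry\, n\log n\,(\mtry+\log n)\big)$.

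For the second step, Algorithm~\ref{alg:drf} draws $M$ bootstrap resamples, fits one independent DRT per resample, and then averages, $\hat f(\bx)=M^{-1}\sum_{m=1}^M \hat T_m(\bx)$, which adds only $O(M)$ work. Hence the total training complexity is $M$ times the single-tree bound, namely $\bigoh\!\big(M\,\mtry\, n\log n\,(\mtry+\log n)\big)$, which is the claimed estimate.

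The only genuine obstacle is the first step: one must argue carefully that neither the $O(\mtry^3)$ eigendecomposition nor the axis-aligned fallback used on nodes with fewer than $p$ samples breaks the $\Theta(s\mtry^2+s\log s)$ template (the fallback is the delicate case, and it is what forces the $p<n$ regime and the presorting bookkeeping above). Once $C(s)$ is pinned to that template, the corollary is an immediate substitution into Theorem~\ref{thm:drt_complexity} together with the linearity of the ensemble in $M$.
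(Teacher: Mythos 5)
Your proposal is correct and follows essentially the same route as the paper: verify that the per-node cost satisfies $C(s) = \bigoh(s\,m_{try}^2 + s\log s)$ (QR/whitening, moment matrices, eigendecomposition with $m_{try}^3 \le s\,m_{try}^2$, and the sort-and-sweep split search), then invoke Theorem~\ref{thm:drt_complexity} and multiply by $M$. The only difference is that the paper simply assumes the SIR/SAVE splitting rule is used all the way down the tree, whereas you explicitly account for the axis-aligned fallback on small nodes; your treatment is slightly more careful on that point but reaches the same bound.
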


\begin{proof}
To apply Theorem \ref{thm:drt_complexity} to our dimension reduction tree algorithm, we must demonstrate that the computational complexity of our splitting procedure satisfies the bounds listed in the theorem. For simplicity, we assume that the dimension reduction tree is built using a SIR (SAVE) splitting rule all the way down the tree and without the variable screening step. In this case, the computation time per node is due to the calculation of the SIR and SAVE directions and the search for the best split. SIR and SAVE both require a QR decomposition, an eigendecomposition of the $\Lambda$ matrix, and sorting the data set with respect to the response. These operations have a combined complexity of
\begin{equation*}
\bigoh(s m_{try}^2 + m_{try}^3 + s \log(s)) = \bigoh(s m_{try}^2 + s \log(s))
\end{equation*}
according to \citetSup{golub2012sup}, where $s$ is the sample size of the node. We used the fact that the first term dominates under the assumption that $m_{try} < s$, which we require due to the limitations of SIR (SAVE). Once these directions are computed, a DRT searches for the best split along the leading direction. This operation is bounded by the theoretical complexity of sorting, i.e., $\bigoh(s \log(s))$. Putting this all together we have that
\begin{equation}\label{eq:drf_comp}
C(s) = \bigoh(s m_{try}^2 + s \log(s))
\end{equation}
for a dimension reduction tree.

Equation (\ref{eq:drf_comp}) allows us to apply Theorem \ref{thm:drt_complexity} to determine the computational complexity of a single DRT. The full dimension reduction forest algorithm is linear in the number of tress $M$, so that the computational complexity of the DRF algorithm is
\begin{equation*}
\bigoh(M m_{try} n \log(n) (m_{try}+ \log(n))).
\end{equation*}
\end{proof}

The time complexity of a standard random forest under similar assumptions~\citeSup{louppe2014sup} is $\bigoh(M \mtry n \log(n)^2)$. If we choose $m_{try} \leq \log(n)$, then the time complexity of a DRF is bounded by the standard random forest's complexity. Note that the worst-case performance is when $m_{try} = p$. In this case, the DRF's performance is worse than a standard RF by a factor of $p / \log(n)$, which is small for many applications where $p << n$.

\subsection{Proofs of Theorem \ref{thm:sir} and Theorem \ref{thm:save}}\label{subsec:sir_save_proofs}

Theorem \ref{thm:sir} verifies that recursive application of SIR within a dimension reduction tree continues to estimate a part of the local central subspace. First for some preliminaries. One can show that Assumption \ref{asmp:lcm} implies that
\[
\E{\bX - \E{\bX | \bX \in A}|B^{\rm T}_A\bX, \bX \in A} = P_{B_A}(\Sigma_A)^{\rm T}(\bX - \E{\bX | \bX \in A}),
\]
where $P_{B_A}(\Sigma_A) = B_A(B_A^{\rm T}\Sigma_A B_A)^{-1}B^{\rm T}_A \Sigma_A$ is the projection matrix onto the central subspace $\SxAy$ (Lemma 1.1 in \citetSup{bing2018sup}). Also, if Assumption \ref{asmp:lcm} and Assumption \ref{asmp:ccv} both hold, then one can show that $\Var{\bX | B_A^{\rm T}\bX, \bX \in A} = \Sigma_A Q$, where $Q = I_{p} - P_{B_A}(\Sigma_A)$ (Corollary 5.1 in \citetSup{bing2018sup}). We are now ready to prove Theorem~\ref{thm:sir} and Theorem~\ref{thm:save}.

\begin{proof}[Proof of Theorem \ref{thm:sir}]
Without loss of ambiguity, all expectations in this proof are conditioned on $\bX$ being contained in the parent node $A$. The proof is little more than an application of the law of total expectation. We only carry out the proof for the left child node $A_L$. The proof for the right child node $A_R$ is exactly the same. Without loss of generality, assume $\E{\bX \mid \bX \in A} = 0$. We have that
\begin{align*}
\E{\bX | Y, \bX \in A_L} &= \E{\bX|Y, \beta^{\rm T}\bX \leq c}, \\
                       &=\E{\E{\bX|Y, \beta^{\rm T}\bX \leq c, B_A^{\rm T}\bX}|Y, \beta^{\rm T}\bX \leq c}, \\
                       &= \E{\E{\bX|B_A^{\rm T}\bX} \lvert Y, \beta^{\rm T} \bX \leq c}, \\
                       &= P_{B_A}(\Sigma_A)^{\rm T} \E{\bX|Y, \beta^{\rm T} \bX \leq c}, \\
                       &= \Sigma_A P_{B_A}(\Sigma_A) \Sigma^{-1}_A\E{\bX|Y, \beta^{\rm T} \bX \leq c}, \\
                       &= \Sigma_A P_{B_A}(\Sigma_A) \Sigma^{-1}_A\E{\bX|Y, \bX \in A_L}.
\end{align*}
Multiplying both sides by $\Sigma^{-1}_A$ completes the proof. The third line is due to the fact that $Y \perp \bX \mid B_A^T\bX$ (the postulated model within the node) and that $\beta^T\bX \perp \bX \mid B_A^T \bX$ since $\beta \in \spn(B_A)$. We also used the easily verifiable identity $P_{B_A}(\Sigma_A)^T = \Sigma^{-1}_AP_{B_A}(\Sigma_A)\Sigma^{-1}_A$.
\end{proof}

Theorem \ref{thm:save} is the corresponding theorem for SAVE. It says that recursive application of SAVE within a dimension reduction tree continues to estimate a part of the local central subspace.

\begin{proof}[Proof of Theorem \ref{thm:save}]
Without loss of ambiguity, all expectations in this proof are conditioned on $\bX$ being contained in the parent node $A$. Once again, we only carry out the proof for the left child node $A_L$. The proof for the right child node $A_R$ is exactly the same. Without loss of generality, we assume $\E{\bX | \bX \in A} = 0$. Similar to the proof of Theorem \ref{thm:sir}, we just apply the law of total variation:
\begin{align*}
\Var{\bX | Y, \bX \in A_L} &= \Var{\bX | Y, \beta^T\bX \leq c}, \\
                           &= \E{\Var{\bX|Y, \beta^{\rm T}\bX \leq c, B^{\rm T}_A\bX} | Y, \beta^{\rm T}\bX \leq c} \ + \\
                             & \qquad \Var{\E{\bX | Y, \beta^{\rm T}\bX \leq c, B^{\rm T}_A \bX} | Y, \beta^{\rm T}\bX \leq c}, \\
                             &= \E{\Var{\bX|B^{\rm T}_A\bX} | Y, \beta^{\rm T}\bX \leq c} +
                                \Var{\E{\bX|B^{\rm T}_A\bX} | Y, \beta^{\rm T}\bX \leq c}, \\
                             &= \Sigma_A \left(I_{p} - P_{B_A}(\Sigma_A)\right) + P_{B_A}(\Sigma_A)^{\rm T} \Var{\bX|Y, \beta^{\rm T} \bX \leq c} P_{B_A}(\Sigma_A), \\
                             &= \Sigma_A \left(I_{p} - P_{B_A}(\Sigma_A)\right) + P_{B_A}(\Sigma_A)^{\rm T} \Var{\bX|Y, \bX \in A_L} P_{B_A}(\Sigma_A).
\end{align*}
Subtracting $\Sigma_A$ from both sides and applying the identity $\Sigma_A P_{B_A}(\Sigma_A) = P_{B_A}(\Sigma_A)^{\rm T} \Sigma_A P_{B_A}(\Sigma_A)$, we have
\begin{equation*}
\Var{\bX | Y, \bX \in A_L} - \Sigma_A = \Sigma_A P_{B_A}(\Sigma_A) \Sigma^{-1}_A \left(\Var{\bX | Y, \bX \in A_L} - \Sigma_A\right) P_{B_A}(\Sigma_A),
\end{equation*}
where the right hand side is in $\Sigma_A \spn(B_A)$ as desired.
\end{proof}

\subsection{Simulation Study and Data Set Details}\label{simulation_details}

In this section, we elaborate on various details of the simulation studies and real data analysis presented in the main text. Table~\ref{tab:param_tune} contains the hyperparameter settings used in the simulation studies of Section~\ref{sec:simulation}. Also, the three Friedman functions used to test DRFs predictive performance are
\begin{itemize}
\item \textit{Friedman 1:} $Y = 10 \sin(\pi X_1 X_2) + 20 (X_3 - 0.5)^2 + 10 X_4 + 5 X_5 + \varepsilon$,
where $\bX_i \iidsim U[0, 1]^{10}$ and $\varepsilon_i \iidsim N(0, 1)$.

\item \textit{Friedman 2:} $Y = \left(X_1^2 + \left[X_2 X_3 - \frac{1}{X_2 X_4}\right]^{2}\right)^{1/2} + \varepsilon$,
where $\bX_i$ are uniform over the hyper-rectangle $[0, 100] \times [20, 280] \times [0, 1] \times [1, 11]$ and $\varepsilon_i \iidsim N(0, 1)$.

\item \textit{Friedman 3:} $Y = \tan^{-1}\left(\frac{X_2 X_3 - \frac{1}{X_2 X_4}}{X_1}\right) + \varepsilon$,
where $\bX_i$ are uniform over the hyper-rectangle $[0, 100] \times [20, 280] \times [0, 1] \times [1, 11]$ and $\varepsilon_i \iidsim N(0, 1)$.
\end{itemize}

\begin{table}[htbp]
\small
\begin{center}
\begin{tabularx}{\textwidth}{@{}lX@{}}\toprule\toprule
NW Kernel & We use a Gaussian kernel as the kernel function. The data is standardized along each dimension. The bandwidth is then set to the recommendation in \citetSup{silverman1986sup}: $(4 / (p + 2))^{1 / (p + 4)} \, n^{-1/(p+4)}$.  \\ \\
SIR + NW Kernel & Estimates a global dimension reduction subspace using SIR with $d = p$ and the number of slices set to 10. Fits an NW Kernel to the projection of the data on to this subspace with the same settings as NW Kernel.   \\ \\
SAVE + NW Kernel & Estimates a global dimension reduction subspace using SAVE with $d = p$ and the number of slices set to 10. Fits an NW Kernel to the projection of the data on to this subspace with the same settings as NW Kernel.  \\ \\
RF & A total of 12 parameter settings. $M = 500$, $\mtry = 2, 4, 6, 1/3, \sqrt{p}, p$ and $n_{min} = 1, 5$.  \\ \\
SIR + RF& Estimates a global dimension reduction subspace using SIR with $d = p$ and the number of slices set to 10. Fits an RF model to the projection of the data onto this subspace with the same settings as RF. \\ \\
SAVE + RF & Estimates a global dimension reduction subspace using SAVE with $d = p$ and the number of slices set to 10. Fits an RF model to the projection of the data onto this subspace with the same settings as RF. \\ \\

DRF & A total of 12 parameter settings. $M = 500$,  $\mtry = 2, 4, 6, 1/3, \sqrt{p}, p$ and $n_{min} = 1, 5$. The number of slices used to estimate SIR and SAVE is set to 10.\\
\bottomrule\bottomrule
\end{tabularx}
\end{center}
\caption{Hyperparameter settings.}
\label{tab:param_tune}
\end{table}

Lastly, Table~\ref{tab:dataset_stats} displays the sample size and number of covariates of each data set used in the real data analysis of Section~\ref{sec:real_data}.

\begin{table}[htbp]
    \small
    \begin{center}
    \begin{tabular}{@{}lll@{}} \toprule
    Data Set Name & Number of Samples & Number of Features  \\
    \bottomrule[\lightrulewidth]\toprule
    Abalone & 4177 & 8 \\
    Body Fat & 245 & 14 \\
    CPU Small & 8192 & 12 \\
    Fish Catch & 158 & 7 \\
    Kin8nm & 8192 & 8 \\
    Auto Price & 159 & 15 \\
    Liver & 345 & 5 \\
    Mu284 & 284 & 9 \\
    Puma32H & 8192 & 32 \\
    Puma8NH & 8192 & 8 \\
    Wisconsin & 194 & 32 \\
    Bank8FM & 8192 & 8 \\
    \bottomrule\bottomrule
    \end{tabular}
    \end{center}
    \caption{Metadata for data sets analyzed in the real data analysis.}
    \label{tab:dataset_stats}
\end{table}

\subsection{Additional Results and Figures}\label{sec:add_figs}

In this section, we include the results of a simulation study designed to assess LSVI estimation under an increasing number of uninformative covariates and additional figures from the analysis of air pollution in Beijing, China.

\subsubsection{Additional Simulation Studies}

Here, we include a simulation study designed to measure how LSVI estimation performed under an increasing number of uninformative covariates. We used the same setup as described in Section~\ref{subsec:lsvi_sim} of the main text; however, here we set fixed $\sigma^2$ so that the signal-to-noise ratio was 3:1 and varied $p = 2, 5, 10, 15$, and 20.

Figure \ref{fig:lsvi_results_trcor} displays the boxplots of the trace correlations achieved by the four methods at $p = 2, 5, 10, 15, 20$, respectively. For all but the $p = 2$ cases, the DRF method performs best. As expected, Local SIR performs best when only informative covariates are present; although, the DRF estimates are still competitive. Furthermore, as $p$ increases, Local SIR quickly drops to the worst-performing method due to the curse of dimensionality. In contrast, the DRF estimates remain accurate.

\begin{figure}[tbp]
    \centering
    \includegraphics[width=\textwidth]{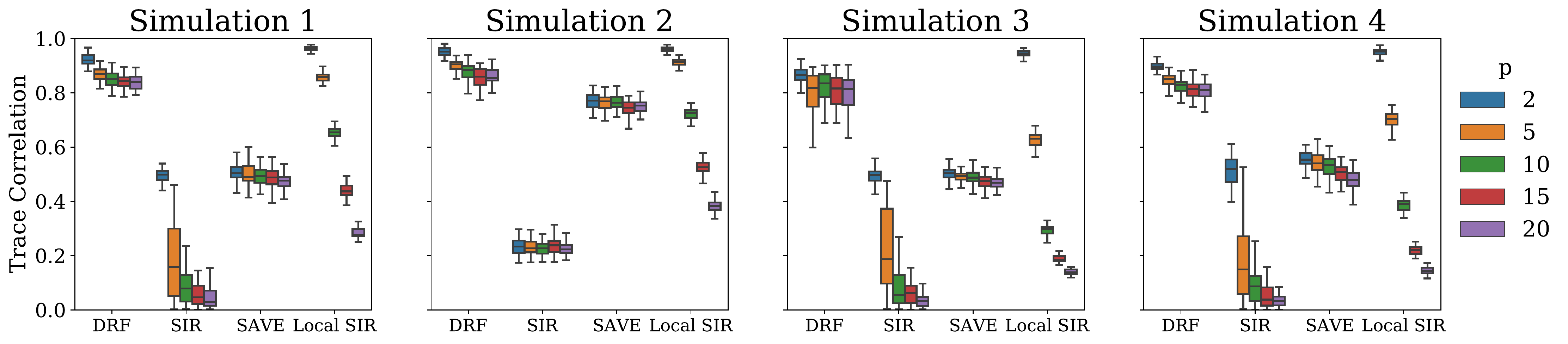}
    \caption{Boxplots of the trace correlations between the local gradients and the LSVIs estimated by the four methods outlined in Section \ref{subsec:lsvi_sim} for different numbers of covariates. The true regression functions only utilized the first two covariates. Higher values of the trace correlation are better.}
    \label{fig:lsvi_results_trcor}
\end{figure}

\subsubsection{Additional Figures}
Figure~\ref{fig:beijing_pairplot} displays the pairwise scatter plots of the meteorological features used in the analysis in Section~\ref{subsec:beijing} of the main text. The points are color coded by the value of the response variable, PM2.5 concentration. Figure~\ref{fig:beijing_imp} contains the global permutation-based variable importance calculated using a standard random forest and a dimension reduction forest trained on the data set.

\begin{figure}[htbp]
\centering
\includegraphics[width=0.6\textwidth]{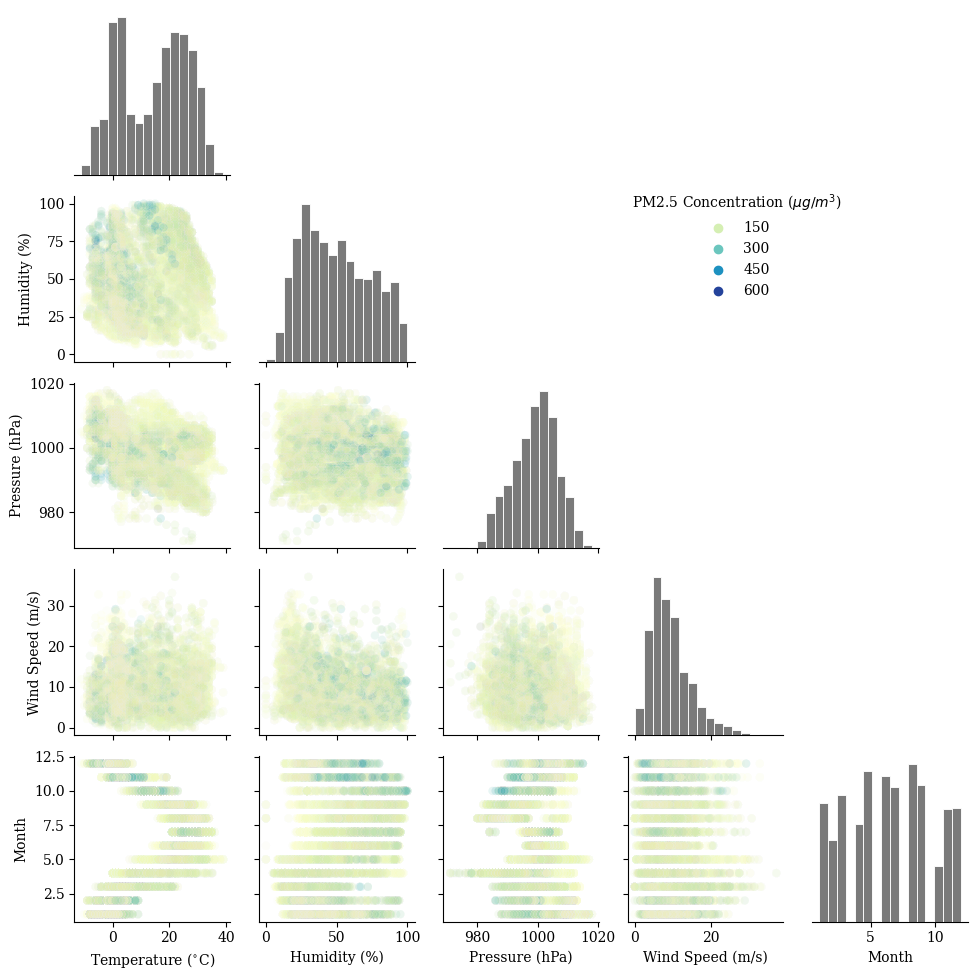}
\caption{Pairwise scatter plots of meteorological variables.}
\label{fig:beijing_pairplot}
\end{figure}

\begin{figure}[htbp]
\centering
\includegraphics[width=\textwidth]{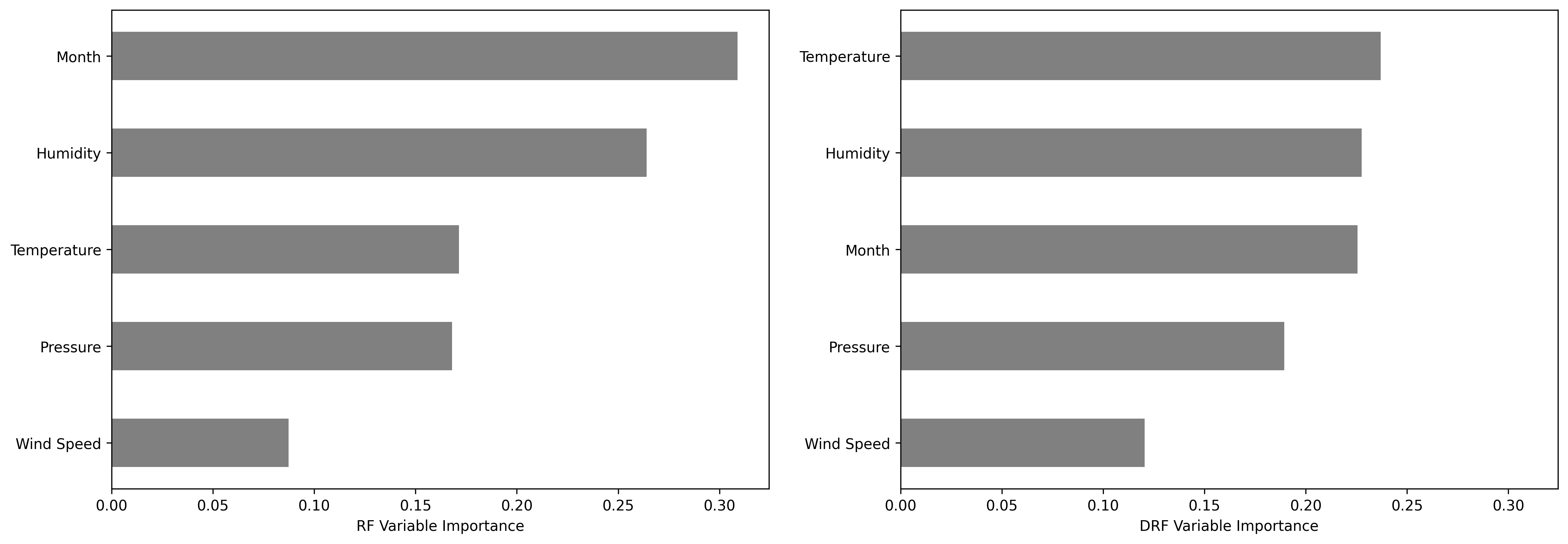}
\caption{Global permutation-based variable importance from a traditional random forest (left) and a dimension reduction forest (right) trained on the PM2.5 concentration data set.}
\label{fig:beijing_imp}
\end{figure}

\bibliographystyleSup{asa}
\bibliographySup{reference}

\end{document}